\documentclass[letterpaper]{article}
\usepackage[preprint]{aaai2027}
\usepackage[hyphens]{url}
\usepackage{graphicx}
\usepackage{natbib}
\usepackage{caption}
\usepackage{amsmath,amssymb,amsthm}
\usepackage{algorithm}
\usepackage{algpseudocode}
\usepackage{booktabs}
\usepackage{array}
\usepackage{afterpage}

\newtheorem{proposition}{Proposition}
\newtheorem{theorem}{Theorem}
\newtheorem{corollary}{Corollary}
\newtheorem{lemma}{Lemma}

\title{MATERO-RCA: Mode-Aware Trajectory-Level Energy-Based Root-Set Optimization for Industrial Root Cause Analysis}
\author{
    Chengyu Tao\textsuperscript{\rm 1}\corresponding,
    Chunxi Huang\textsuperscript{\rm 2},
    Runquan Xiao\textsuperscript{\rm 3}
}
\affiliations{
    \textsuperscript{\rm 1}College of Mechanical and Vehicle Engineering, Hunan University, China\\
    \textsuperscript{\rm 2}School of Design, Hunan University, China\\
    \textsuperscript{\rm 3}Xiamen Xiangyu Group Corporation, China
}

\begin{document}

\maketitle

\begin{abstract}

Root cause analysis (RCA) for contextual anomalies in industrial time series is
challenging because responses depend jointly on control commands, operating
states, and coupled physical variables. A response can appear
marginally normal yet violate its operating context. Events may involve
multiple roots and alarms, with each root assigned an observation-only
effect confined to its recorded trajectory or a physical-propagation effect on
descendants. We propose Mode-Aware Trajectory-Level Energy-Based Root-Set
Optimization for Root Cause Analysis (MATERO-RCA), which jointly optimizes a
root set, root-effect modes, and auxiliary counterfactual trajectories.
Its graph-wide objective combines alarm resolution with temporal compatibility
across local causal relations. A Temporal Compatibility Network
    (CompatNet) maps parent-conditioned trajectory likelihoods to calibrated
    compatibility energies. A Counterfactual Repair Network (RepairNet) initializes
    mode-aware counterfactual trajectories for objective-directed gradient
refinement. An exact mixed-integer linear program minimizes a residual-cover
lower bound, enabling certified best-bound search over the finite admissible
root--mode space under the fixed inner solver. Experiments on simulated and real industrial datasets demonstrate superior RCA performance over representative baselines.
\end{abstract}

\section{Introduction}

Industrial root cause analysis (RCA) seeks the variables that initiate an
alarmed temporal event. Industrial systems couple event-driven control with physical
processes. Commands, operating states, and measured responses are temporally
coupled. After a command change, a delayed response may appear normal in
isolation yet violate the expected command--response relation and trigger an
alarm. Such a mismatch indicates a contextual anomaly
\citep{chandola2009anomaly,hussain2023discovering,mehling2026enabling}.
Time-series anomaly detection (AD) models such context through recent history
and cross-variable dependencies
\citep{carmona2022neural,deng2021graph,li2021multivariate}, but identifies
anomalous windows or relations rather than the root causes of alarms.

RCA in industrial systems remains challenging for three reasons.
(i)~\emph{Unknown root-effect modes}: a root may have an \emph{observation-only}
effect confined to its recorded trajectory or 
a \emph{physical-propagation} effect in which a physical change at the root affects
descendants through their normal causal mechanisms.
(ii)~\emph{Multi-root, multi-alarm events}: an individual alarm may be jointly
caused by multiple roots, while a single event may contain multiple concurrent
alarms.
(iii)~\emph{Implicit temporal dynamics}: a variable-level causal graph identifies
causal relations but not how their trajectories evolve. Industrial command--state
responses can be state-dependent, stochastic, and multimodal, making an explicit structural causal model (SCM) difficult to specify or learn. Figure~\ref{fig:motivation}
illustrates these challenges in one toy industrial event.

\begin{figure}[t]
    \centering
    \includegraphics[width=\columnwidth]{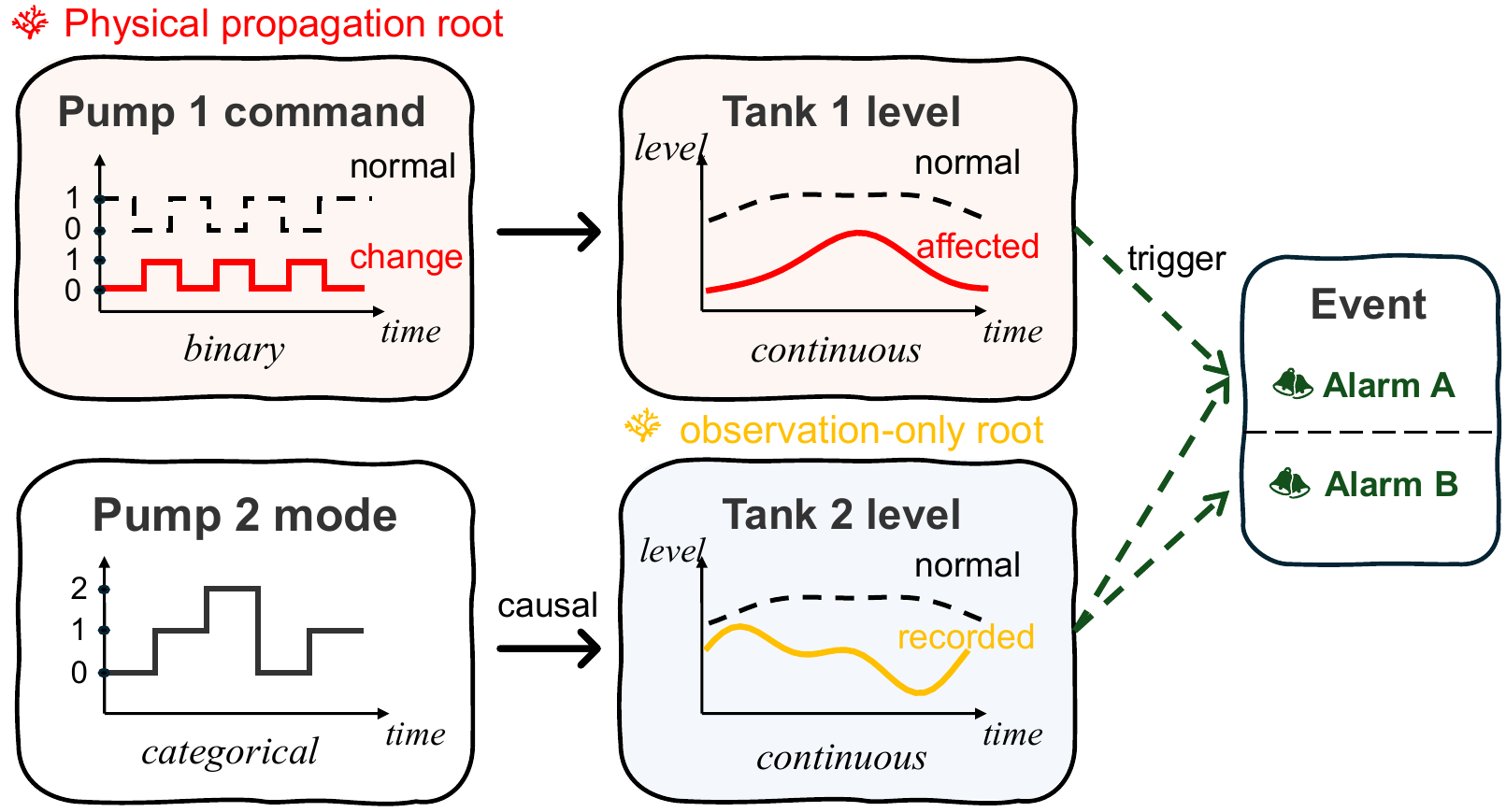}
    \caption{Illustrative industrial event in which roots with physical-propagation
    and observation-only effects jointly trigger one of multiple alarms.}
    \label{fig:motivation}
\end{figure}

Temporal RCA remains limited. EasyRCA~\citep{assaad2023root} and
T-RCA~\citep{zan2024fly} construct root sets from preidentified anomalous
subgraphs using graph structure and anomaly onset times, with EasyRCA
additionally testing direct-effect shifts. This dependence on detected
anomalies can exclude marginally plausible contextual roots before RCA. 
AERCA~\citep{han2025root} instead ranks inferred exogenous deviations.
These methods assume normal downstream propagation and do not model
observation-only root effects.
Dynamic counterfactual RCA~\citep{weilbach2024counterfactual} and
NetCause~\citep{chraim2026netcause} rank candidates by their forward-simulated
effects on subsequent system behavior rather than directly optimizing
counterfactual trajectories to resolve the observed alarm. Among general RCA
methods, CALI~\citep{suhr2026root} considers related root-effect distinctions
only for i.i.d.\ outliers rather than temporal events.

To address these limitations, we propose mode-aware trajectory-level energy-based root-set
optimization for root cause analysis (MATERO-RCA). MATERO-RCA jointly selects a
root set and its root-effect modes, using auxiliary counterfactual trajectories
to resolve active alarms and restore temporal compatibility across local causal
relations. Unlike rollout-based methods
\citep{weilbach2024counterfactual,chraim2026netcause}, these trajectories are
optimized directly under the unified objective.
Our contributions are summarized as follows:
\begin{itemize}
    \item We propose MATERO-RCA for alarmed temporal events in industrial systems,
    jointly optimizing an unknown-cardinality root set, root-effect modes, and
    auxiliary counterfactual trajectories.

    \item We develop an alarm-directed, graph-factored objective with a root-set cardinality penalty. A Temporal Compatibility Network (CompatNet) supplies calibrated
    relation-wise and alarm-context energies by learning conditional or joint likelihoods over mixed industrial signals.
    Low energies indicate alarm resolution and temporal compatibility. A Counterfactual Repair Network (RepairNet) initializes
    mode-aware counterfactual trajectories for objective-directed gradient refinement under the
    same objective.    
    
    \item We derive a sound residual-cover lower bound and an exact
    mixed-integer linear program (MILP) that minimizes it during best-bound search.
    The resulting solver prioritizes competitive root sets and certifies
    fixed-oracle outer optimality and $\epsilon$-optimal alternatives within the
    finite admissible root--mode space.

\end{itemize}

\section{Related Work}

\begingroup
\setlength{\parskip}{0pt}
\noindent\textbf{Temporal Anomaly Detection and RCA.}
Time-series AD models contextual normality from temporal and cross-variable
structure. NCAD~\citep{carmona2022neural} contrasts suspect and context windows.
GDN~\citep{deng2021graph} learns cross-sensor dependency graphs, while
InterFusion~\citep{li2021multivariate} embeds inter-metric and temporal
dependence. They detect anomalous windows or variables rather than roots.
Temporal RCA remains sparse. EasyRCA~\citep{assaad2023root} and
T-RCA~\citep{zan2024fly} construct root sets from detected anomalous subgraphs
using graph structure and onset order. EasyRCA further compares direct effects
across regimes, whereas T-RCA traverses a threshold-based summary causal graph.
Such anomaly-first filtering can exclude marginally plausible contextual roots
before RCA. AERCA~\citep{han2025root} learns lagged Granger dependencies and normal
exogenous distributions, then scores inferred exogenous deviations. Dynamic
counterfactual RCA~\citep{weilbach2024counterfactual} replaces candidate
node--time mechanisms with normal counterparts in an abducted dynamic SCM and
ranks candidates by the induced return to normality. NetCause~\citep{chraim2026netcause}
removes candidate anomalous transitions from a graph-temporal world model and
ranks their downstream effects. These methods use local temporal evidence or
candidate-wise simulations rather than jointly optimizing a trajectory-level
root-set explanation.

\par\noindent\textbf{General Causal RCA.}
Beyond temporal approaches, general RCA localizes roots using anomaly scores,
mechanism changes, or interventions. BARO~\citep{pham2024baro} ranks variables
by Bayesian change-point scores, whereas SmoothTraversal~\citep{orchard2025root}
traverses marginal anomaly-score shifts derived from conditional models under
polytree and single-root assumptions. CIRCA~\citep{li2022causal} uses
parent-conditioned residuals, while Causal outlier
attribution~\citep{budhathoki2022causal} and RootCLAM~\citep{han2023on} score
exogenous-noise contributions. Moreover, distinguishing between normal and abnormal local mechanisms,
RCD~\citep{ikram2022root} and RCG~\citep{ikram2025root} test dependence on a
failure indicator, while StableRCA~\citep{lin2026stablerca} detects
Markov-boundary shifts. These methods assume non-root symptoms preserve normal
mechanisms, excluding observation-only root effects. Set-aware methods accommodate multiple roots but
return per-root rankings or scores for queried sets
\citep{nagalapatti2025robust,lohse2026prim,lu2026probability}. Finally,
CALI~\citep{suhr2026root} uses latent-variable mixture inference to distinguish
root-effect modes, with guarantees restricted to sparse independent hard interventions and
polynomial Gaussian SCMs that do not cover temporally coupled
industrial trajectories.
\par
\endgroup

\section{Methodology}

\begin{figure*}[t]
    \centering
\includegraphics[width=0.85\textwidth]{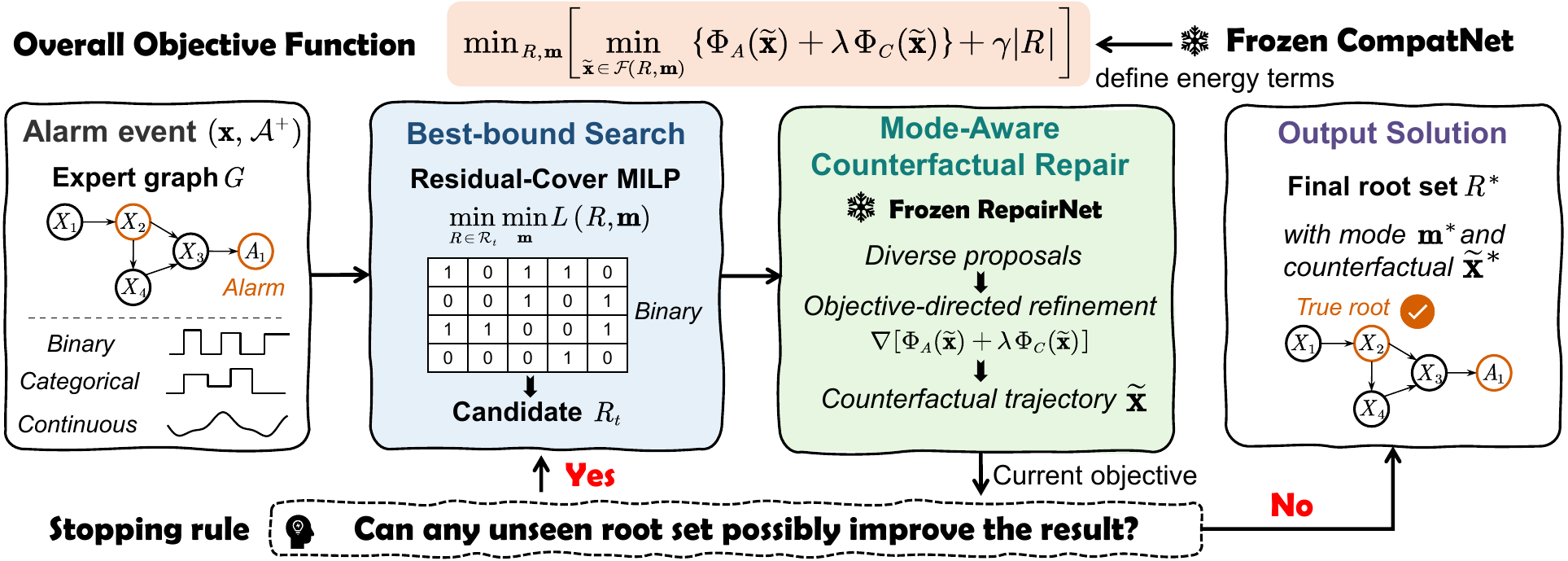}
\caption{MATERO-RCA integrates graph-factored temporal compatibility and
mode-aware counterfactual trajectory optimization into certified
residual-cover best-bound search for root-set optimization.}
    \label{fig:framework}
\end{figure*}

\subsection{Problem Setting and Notation}

Let $G$ be a provided causal graph over $d$ observed temporal variables (signals)
$\mathcal X=\{X_i\}_{i=1}^d$ and $q$ alarm indicators
$\mathcal A=\{A_j\}_{j=1}^q$. $G$ encodes variable-level causal relations
without lag-specific edges or time-unrolled nodes. All variable trajectories and
alarm-parent contexts use the same event window $t=0,\ldots,T-1$. Each $X_i$ is
represented by the trajectory
$\mathbf x_i=(x_{i,0},\ldots,x_{i,T-1})^\top\in\Omega_i^T$, where $\Omega_i$
is its binary, categorical, or continuous domain. Normal
training windows are $\mathcal D_N=\{\mathbf x^{(n)}\}_{n=1}^{N}$. At test time,
an alarmed temporal event is represented by $(\mathbf x,\mathcal A^+)$, where
$\mathbf x=(\mathbf x_1,\ldots,\mathbf x_d)$ contains its variable trajectories
and $\mathcal A^+\subseteq\mathcal A$ its active alarms. Temporal relations are
learned from $\mathcal D_N$ without known structural equations or exogenous-noise
distributions.

For $v\in\mathcal X\cup\mathcal A$, $\operatorname{pa}(v)\subseteq\mathcal X$
denotes its causal parents. For $A_j\in\mathcal A^+$, let
$\operatorname{an}(A_j)\subseteq\mathcal X$ denote its variable ancestors, and
let $\mathcal A^\top\subseteq\mathcal A^+$ contain the active alarms with no
active alarm descendants. Since non-ancestors cannot causally reach a top-level
active alarm, candidates are restricted to
$\mathcal C=\bigcup_{A_j\in\mathcal A^\top}\operatorname{an}(A_j)$.
Every top-level active alarm must be covered, giving
$\mathcal R_0=\{R\subseteq\mathcal C:1\leq|R|\leq K,\,
R\cap\operatorname{an}(A_j)\neq\varnothing,\, \forall A_j\in\mathcal A^\top\}$.
For $R\in\mathcal R_0$, let
$\mathbf m=(m_r)_{r\in R}\in\{\mathrm{o},\mathrm{p}\}^{|R|}$ collect its
auxiliary root-effect modes. Under $\mathrm{o}$, a root's effect is confined to
its recorded trajectory. Under $\mathrm{p}$, the root effect must propagate
through normal-compatible local causal relations toward the alarm-parent
context of at least one top-level active alarm. These modes are auxiliary and
need not be uniquely identifiable.
\emph{Given $\mathcal A^+$, RCA identifies a root set
$R\in\mathcal R_0$ that jointly explains the active alarms.}

\subsection{Overview}

MATERO-RCA evaluates a root-set hypothesis by minimizing alarm and
local-compatibility energies over counterfactual trajectories permitted by its
root-effect modes. For $R$ and $\mathbf m$, let $\mathcal F(R,\mathbf m)$
denote this feasible set. Here $\Phi_A$ penalizes unresolved active alarms,
$\Phi_C$ penalizes departures from normal local causal relations, and
$\lambda\geq0$ balances them. The trajectory objective and its constrained
inner value are
\begingroup
\small
\begin{equation}
J(\widetilde{\mathbf x})=
\Phi_A(\widetilde{\mathbf x})+\lambda\Phi_C(\widetilde{\mathbf x}),\;
\mathcal E(R,\mathbf m)=
\min_{\widetilde{\mathbf x}\in\mathcal F(R,\mathbf m)}J(\widetilde{\mathbf x}).
\label{eq:counterfactual-energy}
\end{equation}
\endgroup
Thus, $\mathcal E(R,\mathbf m)$ measures how well a root--mode assignment can
return the event to model-defined normal operation.

With cardinality weight $\gamma\geq0$, the outer objective and ideal
root--mode assignment are
\begingroup
\small
\begin{equation}
U(R,\mathbf m)=\mathcal E(R,\mathbf m)+\gamma|R|,\;
(R^\star,\mathbf m^\star)\in
\operatorname*{arg\,min}_{R\in\mathcal R_0,\mathbf m}U.
\label{eq:root-set-optimization}
\end{equation}
\endgroup
We assume the generating root set is irreducible: every lower-cardinality
alternative increases the inner value more than it reduces the cardinality
penalty. CompatNet learns temporal likelihoods over local causal relations and
alarm-parent contexts, then calibrates their negative log-likelihoods to define
$\Phi_C$ and $\Phi_A$.
The mixed-type inner problem is nonconvex. RepairNet initializes mode-aware
counterfactual trajectories after self-supervised training on pseudo-corrupted
normal windows. Gradient refinement then optimizes them under
Eq.~\eqref{eq:counterfactual-energy}. Evaluating a root-set hypothesis requires
costly inner-oracle calls across admissible mode assignments, making the outer
problem combinatorial. Residual-cover best-bound search prioritizes the most
competitive unseen root-set hypothesis while certifying fixed-oracle outer
optimality. Figure~\ref{fig:framework} summarizes the inference framework.

\subsection{Graph-Factored Temporal Compatibility}

The variable-level graph $G$ specifies local causal relations but not trajectory
dynamics. Dynamic-SCM rollouts \citep{weilbach2024counterfactual} and
Granger-causal exogenous-variable models \citep{han2025root} require explicit
generative models, which are difficult to identify with stochastic responses or
unobserved high-level commands. Rather than commit to a single forward
realization, CompatNet learns distributional support over admissible temporal
responses. Under assigned root-effect modes, a counterfactual trajectory is
model-valid when its alarm-parent contexts and relevant local causal relations
fall within calibrated normal support. CompatNet maps this criterion to
differentiable energies for Eq.~\eqref{eq:counterfactual-energy} through a
local-relation branch $p_\theta$ and an alarm-context branch $q_\psi$, discussed as follows.

\paragraph{State-Conditioned Finite-Horizon Likelihood.}
Industrial responses may depend on their state at the start of a local horizon.
To retain this dependence without a long autoregressive rollout, CompatNet
scores overlapping $H_i$-step segments conditioned on their anchor states. For
a child $X_i$, let $\mathcal B_i$ index anchors covering the event window:
\begin{equation}
\begin{aligned}
e_i^C(\mathbf x)
&=-\frac{1}{|\mathcal B_i|H_i}\sum_{a\in\mathcal B_i}
\log p_{\theta}\!\left(\mathbf x_{i,a+1:a+H_i}\right.\\
&\hspace{17mm}\left.\mid x_{i,a},
\mathbf x_{\operatorname{pa}(X_i),a:a+H_i}\right).
\end{aligned}
\label{eq:compatnet-local-energies}
\end{equation}
Here $x_{i,a}$ is the anchor response state. A gated recurrent unit (GRU)
\citep{cho2014learning} initializes from the child--parent anchor
and processes relation-specific embeddings of future parent states to score
the child segment. For responses independent of their initial states,
CompatNet uses a full-window likelihood without rollout. Details
are provided in the supplementary material.

\paragraph{Categorical Joint-Parent State Likelihood.}
Normal windows contain no active alarms, so learning
$p(A_j\mid\operatorname{pa}(A_j))$ degenerates to the inactive label. Yet the
parents may be individually plausible while their joint temporal configuration
is not. CompatNet embeds the parent trajectories and predicts their categorical
joint state $s_{j,t}$, discretizing continuous parents when needed. Because
plausibility depends on temporal order, the state sequence is factorized
autoregressively:
\begin{equation}
e_j^A(\mathbf x)
=-\frac{1}{T}\sum_{t=0}^{T-1}
\log q_{\psi}(s_{j,t}\mid s_{j,0:t-1}).
\label{eq:compatnet-alarm-energy}
\end{equation}
This energy measures normal support for the joint alarm-parent context rather than
predicting $A_j$. Accordingly, an active alarm is resolved when its context
returns to learned normal support.

\paragraph{Likelihood Learning.}
Both branches maximize normal-window likelihoods, optionally regularized by
relation-breaking negatives. Training details are provided in the
supplementary material.

\paragraph{Calibrated Compatibility Energies.}
Raw likelihood energies have relation-specific normal baselines. CompatNet
estimates high empirical quantiles $\tau_i^C$ and $\tau_j^A$ from held-out
normal energies \citep{umsonst2023finite}. For an alarmed event, let
$\mathcal I=\{X_i\in\mathcal X:\operatorname{pa}(X_i)\ne\varnothing\}
\cup\mathcal A^+$ index graph nodes with energy terms. CompatNet defines
\begin{equation}
\begin{aligned}
\Phi_C(\mathbf x)&=
\sum_{X_i\in\mathcal I\cap\mathcal X}\phi_i^C(\mathbf x)
=\sum_{X_i\in\mathcal I\cap\mathcal X}[e_i^C(\mathbf x)-\tau_i^C]_+,\\
\Phi_A(\mathbf x)&=
\sum_{A_j\in\mathcal I\cap\mathcal A^+}\phi_j^A(\mathbf x)
=\sum_{A_j\in\mathcal I\cap\mathcal A^+}[e_j^A(\mathbf x)-\tau_j^A]_+.
\end{aligned}
\label{eq:compatnet-calibration}
\end{equation}
Here $[z]_+=\max(z,0)$, $\phi_i^C$ is the calibrated energy for the local
relation $\operatorname{pa}(X_i)\!\to\!X_i$, and $\phi_j^A$ for the alarm-parent
context $\operatorname{pa}(A_j)$. During normal-window
training, alarm-context terms are formed for all $A_j\in\mathcal A$.
Calibration assigns zero penalty within estimated normal support, admitting
multiple normal trajectories rather than a single rollout.

\subsection{Mode-Aware Counterfactual Trajectory Optimization}

For fixed $(R,\mathbf m)$, minimizing $J$ over $\mathcal F(R,\mathbf m)$ is a
nonconvex mixed-domain problem sensitive to initialization. RepairNet supplies
diverse initial trajectories for subsequent optimization under $J$.

\paragraph{Mode-Aware Counterfactual Trajectories.}
For root $r$, let $\mathcal M_{r,\mathrm{o}}=\{r\}$. Under mode
$\mathrm{p}$, $\mathcal M_{r,\mathrm{p}}$ additionally includes descendants
reached through observed local causal relations satisfying
$\phi_i^C(\mathbf x)\leq\varepsilon$. The mutable scope is
$\mathcal M(R,\mathbf m)=\bigcup_{r\in R}\mathcal M_{r,m_r}$.
Then $\mathcal F(R,\mathbf m)$ fixes
$\widetilde{\mathbf x}_i=\mathbf x_i$ for every
$X_i\notin\mathcal M(R,\mathbf m)$.

\paragraph{Local Multi-Proposal Learning.}
RepairNet is self-supervised on pseudo-corrupted normal windows without repair
labels. For each child $X_i$, it perturbs $\mathbf x_i$ into $\mathbf x_i'$
while fixing its causal-parent trajectories, then reconstructs $\mathbf x_i$. Following
stochastic multiple-choice learning and its adaptation to time-series forecasting
\citep{lee2016stochastic,cortes2025winner}, a shared encoder with relation-specific
decoders generates $L$ candidates:
\begin{equation}
\begin{aligned}
\mathbf h_i&=\operatorname{Enc}_{\omega}
(\mathbf x_i',\mathbf x_{\operatorname{pa}(X_i)},i),\quad
\boldsymbol\eta_i^{(1)}=\operatorname{Dec}_{\omega,i}(\mathbf h_i),\\
\boldsymbol\eta_i^{(\ell)}
&=\boldsymbol\eta_i^{(1)}+
\Delta_{\omega,i}(\mathbf h_i,\ell,\mathbf z_\ell),\quad
\widetilde{\mathbf x}_i^{(\ell)}
=g_i(\boldsymbol\eta_i^{(\ell)}).
\end{aligned}
\label{eq:repairnet-proposals}
\end{equation}
The encoder is shared across relations, but decoders and residual heads are
relation-specific. The first proposal uses deterministic decoder logits. For
$\ell>1$, the residual head conditions on the proposal index and Gaussian noise
$\mathbf z_\ell\sim\mathcal N(0,I)$ to generate stochastic alternatives. The
type-specific map $g_i$ gives a differentiable relaxed trajectory.

Replacing $\mathbf x_i$ in the clean window with each candidate from
Eq.~\eqref{eq:repairnet-proposals} yields the full proposal set
$\{\widetilde{\mathbf x}^{(\ell)}\}_{\ell=1}^{L}$. Training combines reconstruction,
low energy, and proposal diversity:
\begin{equation}
\min_{\omega}\quad
\mathcal L_r(\widetilde{\mathbf x}^{(1)},\mathbf x)
+\alpha \mathcal L_e(\{\widetilde{\mathbf x}^{(\ell)}\})
+\beta \mathcal L_d(\{\widetilde{\mathbf x}^{(\ell)}\}).
\label{eq:repairnet-training}
\end{equation}
Here $\mathcal L_r$ compares the deterministic proposal with its
clean target using the corresponding domain loss.
$\mathcal L_e$ is a Gibbs-weighted risk over trajectory objectives,
and $\mathcal L_d$ penalizes insufficient pairwise diversity
\citep{perera2024annealed}. Network and training details are provided in the
supplementary material. At
inference, these learned proposals initialize the mutable variables in
$\mathcal M(R,\mathbf m)$.

\paragraph{Objective-Directed Refinement.}
Because RepairNet is trained on pseudo-corrupted normal windows, its test-event
proposals may remain suboptimal under $J$. We therefore initialize gradient
refinement from them. Binary and categorical variables retain their native
supports, while continuous variables use ordered bins fixed from normal data.
Direct hard selection over these
supports yields zero gradients almost everywhere. For proposal $\ell$,
$\boldsymbol\xi_0^{(\ell)}$ is initialized from its assembled proposal logits,
and $\boldsymbol\xi_s^{(\ell)}$ denotes the refinement logits at step $s$. We
apply an annealed straight-through (ST) estimator
\citep{bengio2013estimating,jang2017categorical} and update
\begin{equation}
\begin{gathered}
\widetilde{\mathbf x}_s^{(\ell)}
=\operatorname{ST}_{\tau_s}(\boldsymbol\xi_s^{(\ell)}),\\
\boldsymbol\xi_{s+1}^{(\ell)}
=\boldsymbol\xi_s^{(\ell)}-\eta_s\nabla_{\boldsymbol\xi}
J(\widetilde{\mathbf x}_s^{(\ell)}).
\end{gathered}
\label{eq:trajectory-refinement}
\end{equation}
Here $\operatorname{ST}_{\tau_s}$ selects the highest-logit valid value in the forward pass and
uses a temperature-scaled sigmoid for binary values or softmax for other
discrete values in the backward pass. The temperature $\tau_s$ is annealed and
$\eta_s$ is the step size. Only trajectories in $\mathcal M(R,\mathbf m)$ are
updated. Each forward trajectory is therefore valid and evaluated by $J$. The
lowest trajectory objective over all proposals and refinement steps defines the approximate
inner value $\widehat{\mathcal E}(R,\mathbf m)=
\min_{\ell,s}J(\widetilde{\mathbf x}_s^{(\ell)})$, which approximates
$\mathcal E(R,\mathbf m)$ in Eq.~\eqref{eq:counterfactual-energy}.

\subsection{Certified Residual-Cover Best-Bound Search}

Enumerating Eq.~\eqref{eq:root-set-optimization} evaluates exponentially many
root--mode assignments through the costly inner solver. The
objective's nonnegative additive energy yields a residual-cover lower bound.
An exact MILP minimizes this bound over unseen root sets, selecting the most
promising set and stopping once none can improve the best value. For a fixed
deterministic inner solver, let $\widehat{\mathbf x}_{R,\mathbf m}$ denote its
output for $(R,\mathbf m)$. Using
$\widehat{\mathcal E}(R,\mathbf m)=J(\widehat{\mathbf x}_{R,\mathbf m})$, define
$\widehat U(R,\mathbf m)=\widehat{\mathcal E}(R,\mathbf m)+\gamma|R|$,
with $\widehat U_R=\min_{\mathbf m}\widehat U(R,\mathbf m)$. The exact
objective in Eq.~\eqref{eq:root-set-optimization} satisfies
$U(R,\mathbf m)\leq\widehat U(R,\mathbf m)$. All guarantees concern this fixed
inner solver.

The lower bound follows from the additive decomposition in
Eq.~\eqref{eq:counterfactual-energy}. Each $v\in\mathcal I$ indexes one
calibrated energy term in Eq.~\eqref{eq:compatnet-calibration}. Let
$S_v\subseteq\mathcal X$ denote the variable trajectories on which that
term depends. Specifically, $S_v=\{v\}\cup\operatorname{pa}(v)$ for
$v\in\mathcal X$ and $S_v=\operatorname{pa}(v)$ for $v\in\mathcal A^+$.
Its contribution to $J(\mathbf x)$ is $c_v=\lambda\phi_i^C(\mathbf x)$ for
$v=X_i$ and $c_v=\phi_j^A(\mathbf x)$ for $v=A_j$. The residual-cover lower
bound is
\begin{equation}
L(R,\mathbf m)=\gamma|R|+
\sum_{v\in\mathcal I}c_v\,
\mathbf 1[S_v\cap\mathcal M(R,\mathbf m)=\varnothing].
\label{eq:residual-cover-bound}
\end{equation}
The indicator retains $c_v$ when the root--mode assignment cannot change any trajectory
in $S_v$. Otherwise, that energy is optimistically set to zero.

\begin{proposition}
If calibrated energies are nonnegative and the fixed inner solver returns a
trajectory in $\mathcal F(R,\mathbf m)$, then
$L(R,\mathbf m)\leq U(R,\mathbf m)\leq
\widehat U(R,\mathbf m)$.
\end{proposition}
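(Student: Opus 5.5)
The two inequalities are proved separately; both follow almost directly from the definitions.

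\textbf{Upper inequality.} By hypothesis the fixed inner solver returns $\widehat{\mathbf x}_{R,\mathbf m}\in\mathcal F(R,\mathbf m)$. Since $\mathcal E(R,\mathbf m)$ is the minimum of $J$ over $\mathcal F(R,\mathbf m)$ in Eq.~\eqref{eq:counterfactual-energy}, any feasible point bounds it from above:
\[
\mathcal E(R,\mathbf m)\le J(\widehat{\mathbf x}_{R,\mathbf m})=\widehat{\mathcal E}(R,\mathbf m).
\]
Adding $\gamma|R|$ to both sides gives $U(R,\mathbf m)\le\widehat U(R,\mathbf m)$. If the minimum is not attained, the same argument works with an infimum.

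\textbf{Lower inequality.} By Eq.~\eqref{eq:compatnet-calibration}, for any trajectory $\widetilde{\mathbf x}$ we have the additive decomposition
\[
J(\widetilde{\mathbf x})=\sum_{v\in\mathcal I}c_v(\widetilde{\mathbf x}),
\]
with $c_{X_i}=\lambda\phi_i^C$ and $c_{A_j}=\phi_j^A$. Each $c_v(\widetilde{\mathbf x})$ depends only on the trajectories indexed by $S_v$. For $A_j$, it depends on the parent trajectories through the state sequence $s_{j,t}$. For $X_i$, it depends on $\widetilde{\mathbf x}_i$ and its parents through Eq.~\eqref{eq:compatnet-local-energies}. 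The thresholds $\tau$ are fixed constants.

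Fix any $\widetilde{\mathbf x}\in\mathcal F(R,\mathbf m)$ and split $\mathcal I$ into two groups.
\begin{itemize}
\item[(i)] If $S_v\cap\mathcal M(R,\mathbf m)=\varnothing$, feasibility forces $\widetilde{\mathbf x}_k=\mathbf x_k$ for all $X_k\in S_v$. Hence $c_v(\widetilde{\mathbf x})=c_v(\mathbf x)$.
\item[(ii)] Otherwise, nonnegativity of the calibrated energies ($[\cdot]_+\ge0$ and $\lambda\ge0$) gives $c_v(\widetilde{\mathbf x})\ge0$.
\end{itemize}
Summing over both groups,
\[
J(\widetilde{\mathbf x})\ge\sum_{v\in\mathcal I}c_v(\mathbf x)\,\mathbf 1\bigl[S_v\cap\mathcal M(R,\mathbf m)=\varnothing\bigr]
\]
for every feasible $\widetilde{\mathbf x}$. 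Taking the minimum over $\mathcal F(R,\mathbf m)$ transfers this bound to $\mathcal E(R,\mathbf m)$. Adding $\gamma|R|$ then yields $L(R,\mathbf m)\le U(R,\mathbf m)$ via Eq.~\eqref{eq:residual-cover-bound}.

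\textbf{Main obstacle.} The only delicate point is step (i): the claim that each term depends only on $S_v$, and that $\mathcal M(R,\mathbf m)$ itself does not depend on $\widetilde{\mathbf x}$. The mutable scope is defined from the observed $\mathbf x$, because $\mathcal M_{r,\mathrm p}$ is determined by $\phi_i^C(\mathbf x)\le\varepsilon$. So $\mathcal M(R,\mathbf m)$ is fixed before optimization, and the frozen-coordinate argument is valid.

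One further check is needed. The anchor states, the discretization into $s_{j,t}$ with bins fixed from normal data, and the network parameters $\theta,\psi$ must introduce no dependence on variables outside $S_v$. This holds because CompatNet's branches take as inputs only the child and its parent trajectories, or only the alarm parents, respectively.
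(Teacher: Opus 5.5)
Your proof is correct and follows the paper's own argument: terms with $S_v\cap\mathcal M(R,\mathbf m)=\varnothing$ keep their observed contribution $c_v$, every other term is bounded below by zero, and minimizing over $\mathcal F(R,\mathbf m)$ and adding $\gamma|R|$ gives $L\le U$, while feasibility of the solver output gives $U\le\widehat U$. Your added check that the scopes $\mathcal M_{r,m}$ are computed from the observed trajectory and then frozen is the same justification the paper gives in its supplementary setup before the proof.
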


\paragraph{Best-Bound Search.}
Let $\widehat U_0=+\infty$. At iteration $t$, let
$\mathcal R_t=\mathcal R_0\setminus\{R_s:s<t\}$ contain the unseen root sets
and $\widehat U_t=\min_{s<t}\widehat U_{R_s}$ be the best evaluated objective
value.
\emph{(1) Selection.} Select the unseen root set with the smallest lower bound:
\begingroup
\small
\begin{equation}
R_t\in\arg\min_{R\in\mathcal R_t}\min_{\mathbf m}L(R,\mathbf m),
\qquad
B_t=\min_{\mathbf m}L(R_t,\mathbf m).
\label{eq:best-bound-selection}
\end{equation}
\endgroup
Set $B_t=+\infty$ if $\mathcal R_t=\varnothing$.
\emph{(2) Certification.} For tolerance $\epsilon\geq0$, terminate if
$B_t>\widehat U_t+\epsilon$.
\emph{(3) Evaluation.} Otherwise, evaluate all admissible modes of $R_t$ and
set $\widehat U_{R_t}=\min_{\mathbf m}\widehat U(R_t,\mathbf m)$.
\emph{(4) Update.} Set
$\widehat U_{t+1}=\min\{\widehat U_t,\widehat U_{R_t}\}$ and
$\mathcal R_{t+1}=\mathcal R_t\setminus\{R_t\}$, then repeat.
Algorithm~\ref{alg:matero-inference} summarizes the procedure.
Proposition~2 shows that Eq.~\eqref{eq:best-bound-selection} can be solved
exactly, while Theorem~1 establishes optimality certificates for the resulting
best-bound search.

\begin{algorithm}[tb]
\caption{MATERO-RCA Best-Bound Search}
\label{alg:matero-inference}
\textbf{Input}: Alarmed temporal event $(\mathbf x,\mathcal A^+)$, graph $G$, frozen
CompatNet and RepairNet, $\mathcal R_0$, and $\epsilon\geq0$\\
\textbf{Output}: $(\widehat R^\star,\widehat{\mathbf m}^\star,
\widehat{\mathbf x}^{\star},\widehat U_t)$
\begin{algorithmic}[1]
\State Compute Eq.~\eqref{eq:residual-cover-bound} and initialize
  $t\leftarrow0$ and $\widehat U_0\leftarrow+\infty$
\While{$\mathcal R_t\ne\varnothing$}
  \State Solve the MILP Eq.~\eqref{eq:best-bound-selection} to
    obtain $R_t$ and $B_t$
  \State \textbf{if} $B_t>\widehat U_t+\epsilon$ \textbf{then break}
  \ForAll{admissible $\mathbf m\in\{\mathrm{o},\mathrm{p}\}^{|R_t|}$}
\State Generate $\{\widetilde{\mathbf x}^{(\ell)}\}_{\ell=1}^{L}$ by
      Eq.~\eqref{eq:repairnet-proposals}
    \State Refine all proposals by Eq.~\eqref{eq:trajectory-refinement} over
      $\mathcal F(R_t,\mathbf m)$ to obtain
      $\widehat{\mathbf x}_{R_t,\mathbf m}$
    \State $\widehat U(R_t,\mathbf m)\leftarrow
      J(\widehat{\mathbf x}_{R_t,\mathbf m})+\gamma|R_t|$
  \EndFor
  \State $\mathbf m_t\leftarrow\arg\min_{\mathbf m}\widehat U(R_t,\mathbf m)$
    and $\widehat U_{R_t}\leftarrow\widehat U(R_t,\mathbf m_t)$
  \State If $\widehat U_{R_t}<\widehat U_t$, set
    $(\widehat R^\star,\widehat{\mathbf m}^\star,\widehat{\mathbf x}^{\star})
    \leftarrow(R_t,\mathbf m_t,\widehat{\mathbf x}_{R_t,\mathbf m_t})$
  \State $\widehat U_{t+1}\leftarrow\min\{\widehat U_t,\widehat U_{R_t}\}$,
    $\mathcal R_{t+1}\leftarrow\mathcal R_t\setminus\{R_t\}$, and
    $t\leftarrow t+1$
\EndWhile
\State \Return $(\widehat R^\star,\widehat{\mathbf m}^\star,
  \widehat{\mathbf x}^{\star},\widehat U_t)$
\end{algorithmic}
\end{algorithm}

\begin{proposition}[Exact MILP encoding]
Equation~\eqref{eq:best-bound-selection} has an exact binary MILP encoding with
auxiliary variables. Its selected root set is $R_t$, and its optimal value is
$B_t$.
\end{proposition}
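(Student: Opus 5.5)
We construct an explicit binary program and verify that its feasible set is in bijection with the admissible pairs $(R,\mathbf m)$, $R\in\mathcal R_t$, and that its objective equals $L(R,\mathbf m)$ under this bijection. Taking a minimum over the feasible set then returns a minimizer $R_t$ of $\min_{\mathbf m}L(R_t,\mathbf m)$ together with the value $B_t$.

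\textbf{Variables and root-set constraints.} We use the following binary variables:
\begin{itemize}
\item $y_r$ for $r\in\mathcal C$, which selects the root set;
\item $w_{r,\mathrm o},w_{r,\mathrm p}$ with $w_{r,\mathrm o}+w_{r,\mathrm p}=y_r$, which select the mode;
\item $z_i$ for $X_i\in\mathcal X$, which indicate $X_i\in\mathcal M(R,\mathbf m)$;
\item $u_v$ for $v\in\mathcal I$, which indicate $S_v\cap\mathcal M\neq\varnothing$.
\end{itemize}
Membership in $\mathcal R_0$ is expressed linearly by $1\le\sum_r y_r\le K$ and by $\sum_{r\in\operatorname{an}(A_j)}y_r\ge1$ for every $A_j\in\mathcal A^\top$.

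Admissibility of modes is encoded by fixing $w_{r,\mathrm p}=0$ for every $r$ whose propagation mode is inadmissible. This applies when $\mathcal M_{r,\mathrm p}$ reaches no top-level alarm-parent context. Since $\mathcal M_{r,\mathrm p}$ is computed from the observed $\mathbf x$ and $\varepsilon$ before the search, it is a constant set.

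Seen root sets are excluded by no-good cuts. For each evaluated $R_s$ with $s<t$,
\[
\sum_{r\in R_s}(1-y_r)+\sum_{r\in\mathcal C\setminus R_s}y_r\ge1 .
\]
This cut removes exactly the single 0/1 point $y=\mathbf 1_{R_s}$, which gives $R\in\mathcal R_t$.

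\textbf{Scope and objective.} Because $\mathcal M_{r,m}$ are precomputed constants, the union defining $\mathcal M(R,\mathbf m)$ is linearized exactly:
\[
z_i\ge w_{r,m}\ \text{ for } X_i\in\mathcal M_{r,m},\qquad z_i\le\sum_{(r,m):X_i\in\mathcal M_{r,m}}w_{r,m}.
\]
Similarly, $u_v\le\sum_{X_i\in S_v}z_i$ and $u_v\ge z_i$ for $X_i\in S_v$. The objective is
\[
\gamma\sum_r y_r+\sum_{v\in\mathcal I}c_v(1-u_v),
\]
where the $c_v$ are constants computed on the observed $\mathbf x$, as in Eq.~\eqref{eq:residual-cover-bound}. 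The objective is therefore linear.

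\textbf{Correctness.} Given $(R,\mathbf m)$, the induced $(y,w)$ forces every $z_i$ and $u_v$ to a unique value by the two-sided constraints. Those values equal the true indicators, so the objective equals $L(R,\mathbf m)$. Conversely, every feasible point arises in this way from some admissible $(R,\mathbf m)$ with $R\in\mathcal R_t$. Hence the MILP optimum equals $\min_{R\in\mathcal R_t}\min_{\mathbf m}L(R,\mathbf m)=B_t$, and its $y$-projection gives an $R_t$ attaining Eq.~\eqref{eq:best-bound-selection}. If $\mathcal R_t=\varnothing$, the program is infeasible, which matches the convention $B_t=+\infty$.

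One-sided constraints would also suffice, since $c_v\ge0$ pushes $u_v$ upward. We nonetheless use two-sided constraints, because they make the equivalence hold pointwise over the feasible set, not only at optima.

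\textbf{Main obstacle.} The step that needs the most care is the admissibility of the $\mathrm p$ mode. If admissibility depended jointly on several roots, for example through propagation from a combination of roots, it would not reduce to per-root fixing. Our first approach is to argue from the definition that $\mathcal M_{r,\mathrm p}$ and its reachability condition are per-root and precomputable, so fixing $w_{r,\mathrm p}=0$ is exact. Failing that, we would add a linear constraint requiring $\sum_{r,m}$ of propagating roots that reach some $\operatorname{pa}(A_j)$ to be at least $1$ for each $\mathrm p$-assigned root, which is still a linear encoding.
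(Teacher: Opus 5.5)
Your proposal is correct and shares the paper's skeleton: binary root--mode action variables with one mode per selected root, the cardinality and top-level-alarm coverage rows, no-good cuts whose left side is $|R\triangle R_s|$, fixing of inadmissible actions, and a correspondence between feasible points and admissible $(R,\mathbf m)$ with $R\in\mathcal R_t$.

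The two proofs differ in how they linearize the indicator $\mathbf 1[S_v\cap\mathcal M(R,\mathbf m)=\varnothing]$.

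\begin{itemize}
\item You add a layer of scope variables $z_i$ and coverage variables $u_v$, with two-sided constraints that pin both to their true values. The objective then equals $L(R,\mathbf m)$ at every feasible point.
\item The paper skips the $z$ layer. It uses the fact that $S_v\cap\bigcup_r\mathcal M_{r,m_r}\neq\varnothing$ if and only if some selected action has $S_v\cap\mathcal M_{r,m_r}\neq\varnothing$. It therefore precomputes action--term incidences and imposes one covering row per energy term, $\iota_v+\sum_{(r,m):S_v\cap\mathcal M_{r,m}\neq\varnothing}\upsilon_{r,m}\ge1$, where $\iota_v$ is a one-sided ``retain'' variable.
\item Exactness in the paper comes from its lemma: minimizing over $\iota$, using $c_v\ge0$, recovers the retained-energy sum. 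Equality with $L$ thus holds only after optimizing out $\iota$, not pointwise.
\end{itemize}

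Your version gives pointwise exactness and does not depend on the sign of $c_v$. The paper's version gives a smaller model, with no extra $|\mathcal X|$ binaries and far fewer rows. Its LP relaxation is also at least as tight, because your chained bound $u_v\le\sum_{X_i\in S_v}z_i$ can count one action several times when that action meets $S_v$ in several variables.

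Your ``main obstacle'' does not arise. The paper defines admissibility per action: each enabled $(r,m)$ has its frozen scope $\mathcal M_{r,m}$. Per-action fixing is therefore exact, and your fallback constraint is unnecessary. You should, however, fix every inadmissible action to zero, not just $w_{r,\mathrm p}$. That includes $w_{r,\mathrm o}$ when mode $\mathrm o$ is disabled, as in the ablation.
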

The full encoding and proof are provided in the supplementary material, and we
solve the MILP with HiGHS \citep{huangfu2018parallelizing}.

\begin{theorem}[Fixed-oracle best-bound certificate]
For the fixed inner solver, let
$\widehat U^\star=\min_{R\in\mathcal R_0}\widehat U_R$. At iteration $t$, after
solving Eq.~\eqref{eq:best-bound-selection} and before evaluating $R_t$,
Propositions~1--2 imply
\begin{equation}
\begin{gathered}
B_t\leq\min_{R\in\mathcal R_t}\widehat U_R,\\
0\leq\widehat U_t-\widehat U^\star
\leq[\widehat U_t-B_t]_+ .
\end{gathered}
\label{eq:anytime-gap}
\end{equation}
Thus $B_t\geq\widehat U_t$ certifies fixed-oracle outer optimality, while
$B_t>\widehat U_t+\epsilon$ certifies that every root set with
$\widehat U_R\leq\widehat U^\star+\epsilon$ has been evaluated. The search
performs at most $|\mathcal R_0|$ completed root-set evaluations.
\end{theorem}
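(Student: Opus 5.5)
The proof needs only Propositions~1--2 and the bookkeeping of the search. The plan is to get the first inequality in Eq.~\eqref{eq:anytime-gap} from a pointwise bound taken over modes, and then derive the gap bound and both certificates from it together with the partition $\mathcal R_0=\mathcal R_t\cup\{R_s:s<t\}$.

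\textbf{Step 1: the bound on unseen sets.} Fix any $R\in\mathcal R_t$ and any admissible $\mathbf m$. Proposition~1 gives $L(R,\mathbf m)\leq\widehat U(R,\mathbf m)$. Taking minima over $\mathbf m$ yields $\min_{\mathbf m}L(R,\mathbf m)\leq\widehat U_R$. Let $\mathbf m_R$ attain $\widehat U_R$; then $\min_{\mathbf m}L(R,\mathbf m)\le L(R,\mathbf m_R)\le \widehat U_R$. This step requires the minima in the selection rule and in $\widehat U_R$ to range over the same admissible mode set, and I will state that explicitly.

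By Proposition~2, the MILP returns $B_t=\min_{R\in\mathcal R_t}\min_{\mathbf m}L(R,\mathbf m)$. Hence $B_t\le\min_{R\in\mathcal R_t}\widehat U_R$. If $\mathcal R_t=\varnothing$, the convention $B_t=+\infty$ makes the statement vacuous, since the minimum over the empty set is $+\infty$.

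\textbf{Step 2: the gap bound.} Split $\widehat U^\star=\min\{\min_{s<t}\widehat U_{R_s},\ \min_{R\in\mathcal R_t}\widehat U_R\}=\min\{\widehat U_t,\ \min_{\mathcal R_t}\widehat U_R\}$. This gives $\widehat U^\star\le\widehat U_t$, and Step 1 gives $\widehat U^\star\ge\min\{\widehat U_t,B_t\}$. Therefore
\[
\widehat U_t-\widehat U^\star\le \widehat U_t-\min\{\widehat U_t,B_t\}=[\widehat U_t-B_t]_+ .
\]
At $t=0$, both $\widehat U_0$ and possibly $B_0$ may be infinite. I will treat this case separately, reading the inequality in the extended reals or restricting to $t\ge1$. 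I expect this edge-case bookkeeping, rather than any substantive step, to be the main nuisance.

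\textbf{Step 3: certificates and termination.} If $B_t\ge\widehat U_t$, the gap is zero, so the incumbent attains $\widehat U^\star$. For the $\epsilon$ claim, suppose $B_t>\widehat U_t+\epsilon$ and take any $R\in\mathcal R_0$ with $\widehat U_R\le\widehat U^\star+\epsilon$. Assume for contradiction that $R\in\mathcal R_t$. Step 1 then gives $\widehat U_R\ge B_t>\widehat U_t+\epsilon\ge\widehat U^\star+\epsilon$, a contradiction. So $R$ lies among the evaluated sets $\{R_s:s<t\}$.

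Finally, each completed evaluation removes a distinct set from $\mathcal R_t$, so $|\mathcal R_t|$ strictly decreases. This bounds the number of evaluations by $|\mathcal R_0|$, and the loop exits once $\mathcal R_t=\varnothing$ or the certificate triggers.
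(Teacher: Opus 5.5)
Your proposal is correct and follows essentially the same route as the paper's proof. Proposition~1, minimized over the shared admissible modes, together with Proposition~2's exactness gives $B_t\le\min_{R\in\mathcal R_t}\widehat U_R$; the split $\mathcal R_0=\mathcal R_t\cup\{R_s:s<t\}$ then yields the gap (the paper writes it as the identity $\widehat U_t-\widehat U^\star=[\widehat U_t-\min_{R\in\mathcal R_t}\widehat U_R]_+$, you as $\widehat U^\star\ge\min\{\widehat U_t,B_t\}$), and the certificates and the $|\mathcal R_0|$ bound follow by the same bookkeeping, with your handling of $t=0$ and $\mathcal R_t=\varnothing$ matching the paper's.
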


\begin{corollary}[Certified root-set separation]
Let $\widehat R^\star$ be an evaluated root set attaining the fixed-oracle
optimum certified by Theorem~1, and let $\widehat U_t^{\mathrm{alt}}$ be the
smallest objective value among evaluated root sets distinct from
$\widehat R^\star$.
Define
\begin{equation*}
\underline\Delta_t=\min\{\widehat U_t^{\mathrm{alt}},B_t\}-\widehat U_t,
\end{equation*}
where $\widehat U_t^{\mathrm{alt}}=+\infty$ if no such set exists. Then
$0\leq\underline\Delta_t\leq\widehat U_R-\widehat U^\star$ for every
$R\ne\widehat R^\star$. If $\underline\Delta_t>\epsilon$,
$\widehat R^\star$ is the unique $\epsilon$-optimal root set.
\end{corollary}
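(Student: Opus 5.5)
The plan is to work at the certification iteration $t$ that Theorem~1 refers to. At that point the search has just solved Eq.~\eqref{eq:best-bound-selection}, and either $B_t\geq\widehat U_t$ holds or $\mathcal R_t=\varnothing$, in which case $B_t=+\infty$. We are also given $\widehat U^\star=\widehat U_t=\widehat U_{\widehat R^\star}$. The root sets $R\ne\widehat R^\star$ fall into two disjoint classes: the evaluated ones, $R=R_s$ with $s<t$, and the unseen ones, $R\in\mathcal R_t$. The first step bounds $\widehat U_R$ from below in each class. For evaluated sets, $\widehat U_R\geq\widehat U_t^{\mathrm{alt}}$ holds by the definition of $\widehat U_t^{\mathrm{alt}}$ as the minimum over evaluated sets distinct from $\widehat R^\star$. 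For unseen sets, the first inequality of Eq.~\eqref{eq:anytime-gap} in Theorem~1 gives $\widehat U_R\geq\min_{R'\in\mathcal R_t}\widehat U_{R'}\geq B_t$; this inequality itself follows from Proposition~1 applied with the minimizing mode, together with the exactness guaranteed by Proposition~2. In both classes, therefore, $\widehat U_R\geq\min\{\widehat U_t^{\mathrm{alt}},B_t\}$. Subtracting $\widehat U_t=\widehat U^\star$ gives the upper bound $\underline\Delta_t\leq\widehat U_R-\widehat U^\star$ for every $R\ne\widehat R^\star$.

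The second step is nonnegativity. Since $\widehat U_t$ is the minimum over all evaluated sets, $\widehat U_t^{\mathrm{alt}}\geq\widehat U_t$. The certification condition gives $B_t\geq\widehat U_t$, and if the search stopped because $\mathcal R_t=\varnothing$, then $B_t=+\infty$. Hence the minimum of the two is at least $\widehat U_t$, so $\underline\Delta_t\geq0$. The same inequality is what keeps the bound coherent when ties occur: if another evaluated set also attains the optimum, then $\widehat U_t^{\mathrm{alt}}=\widehat U_t$, so $\underline\Delta_t=0$ and the bound still holds.

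The third step is uniqueness. Suppose $\underline\Delta_t>\epsilon$. Then every $R\ne\widehat R^\star$ satisfies $\widehat U_R\geq\widehat U^\star+\underline\Delta_t>\widehat U^\star+\epsilon$, so no such set is $\epsilon$-optimal. Meanwhile $\widehat R^\star$ attains $\widehat U^\star$ and is trivially $\epsilon$-optimal, so it is the unique $\epsilon$-optimal root set.

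The only delicate point is the hypothesis that Theorem~1 has certified optimality, i.e.\ $B_t\geq\widehat U_t$. This is needed for nonnegativity; the upper bound does not require it. I would state explicitly that "attaining the fixed-oracle optimum certified by Theorem~1" means this condition holds. I would also handle the degenerate infinity cases, $\widehat U_t^{\mathrm{alt}}=+\infty$ or $B_t=+\infty$, by extended-real conventions: in that case $\underline\Delta_t=+\infty$ only if no other root set exists in $\mathcal R_0$, and then the claims hold vacuously.
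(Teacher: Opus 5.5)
Your proposal is correct and follows the paper's proof essentially step for step. You split the alternatives into evaluated sets, bounded below by $\widehat U_t^{\mathrm{alt}}$, and unseen sets, bounded below by $B_t$ via the first inequality of Theorem~1; you get $\underline\Delta_t\geq0$ from $\widehat U_t^{\mathrm{alt}}\geq\widehat U_t$ and the certification condition $B_t\geq\widehat U_t$; and you then conclude uniqueness when $\underline\Delta_t>\epsilon$. Your added remarks on ties and on the $+\infty$ conventions are accurate and only make explicit what the paper leaves implicit.
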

Proofs are given in the supplementary material. These guarantees concern the
finite admissible root--mode space under the fixed inner solver, rather than
global optimization over continuous trajectories.

\section{Experiments}

\begin{table*}[t]
\centering
{\small
\setlength{\tabcolsep}{0.55mm}
\renewcommand{\arraystretch}{0.96}
\begin{tabular}{@{}l*{6}{ccc}@{}}
\toprule
Method
& \multicolumn{3}{c}{causRCA}
& \multicolumn{3}{c}{TA}
& \multicolumn{3}{c}{TS}
& \multicolumn{3}{c}{LM}
& \multicolumn{3}{c}{QT}
& \multicolumn{3}{c}{Mean} \\
\cmidrule(lr){2-4}\cmidrule(lr){5-7}\cmidrule(lr){8-10}
\cmidrule(lr){11-13}\cmidrule(lr){14-16}\cmidrule(lr){17-19}
& A@1 & C@3 & Set F1
& A@1 & C@3 & Set F1
& A@1 & C@3 & Set F1
& A@1 & C@3 & Set F1
& A@1 & C@3 & Set F1
& A@1 & C@3 & Set F1 \\
\midrule
EasyRCA$^\dagger$
& 88.2 & 80.3 & 81.0
& 0.0 & 13.3 & 0.0
& 60.0 & \underline{83.3} & 58.9
& 14.0 & 4.0 & 14.3
& 57.1 & 77.1 & 53.9
& 43.9 & 51.6 & 41.6 \\
T-RCA$^\dagger$
& 88.2 & 80.3 & 81.0
& 0.0 & 13.3 & 0.0
& 56.7 & \underline{83.3} & 55.6
& 14.0 & 4.0 & 14.3
& \underline{97.1} & 94.3 & \underline{94.8}
& 51.2 & 55.0 & 49.1 \\
AERCA
& 68.3 & 79.3 & 71.4
& 10.0 & 26.7 & 8.9
& 53.3 & 63.3 & 46.7
& 0.0 & 0.0 & 0.0
& 45.7 & 74.3 & 57.1
& 35.5 & 48.7 & 36.8 \\
AERCA$^\dagger$
& 74.4 & 77.8 & 70.2
& 3.3 & 26.7 & 5.6
& 40.0 & 56.7 & 36.7
& 2.0 & 2.0 & 2.0
& 60.0 & 74.3 & 60.4
& 36.0 & 47.5 & 35.0 \\
CIRCA
& 78.3 & \textbf{100.0} & 89.5
& 43.3 & 70.0 & 47.2
& 50.0 & \underline{83.3} & 49.4
& 30.0 & 32.0 & 33.3
& \underline{97.1} & \underline{97.1} & 66.5
& 59.8 & 76.5 & 57.2 \\
RCG
& \underline{93.6} & \textbf{100.0} & 88.7
& \underline{83.3} & \textbf{100.0} & \underline{86.7}
& \underline{93.3} & \textbf{100.0} & \underline{91.6}
& \underline{96.0} & \underline{92.0} & 94.0
& 94.3 & 88.6 & 90.5
& \underline{92.1} & \underline{96.1} & \underline{90.3} \\
StableRCA
& 92.4 & \underline{92.3} & \underline{90.4}
& 23.3 & 83.3 & 36.0
& \textbf{96.7} & \textbf{100.0} & 89.4
& \textbf{100.0} & 82.0 & \underline{97.3}
& 88.6 & \textbf{100.0} & 90.3
& 80.2 & 91.5 & 80.7 \\
SmoothTraversal
& 67.5 & 85.2 & 62.9
& 0.0 & 50.0 & 8.9
& 63.3 & 80.0 & 62.2
& 36.0 & 22.0 & 39.0
& 28.6 & 71.4 & 32.4
& 39.1 & 61.7 & 41.1 \\
BARO
& 87.2 & 88.4 & 85.5
& 0.0 & 13.3 & 0.0
& 40.0 & 80.0 & 40.0
& 0.0 & 0.0 & 0.0
& 68.6 & 94.3 & 90.5
& 39.2 & 55.2 & 43.2 \\
IDI$^\dagger$
& 80.4 & 89.9 & 79.3
& 66.7 & \underline{93.3} & 58.3
& 30.0 & 73.3 & 50.0
& \underline{96.0} & 86.0 & 91.0
& 82.9 & 85.7 & 78.4
& 71.2 & 85.7 & 71.4 \\
\midrule
MATERO-RCA
& \textbf{100.0} & \textbf{100.0} & \textbf{100.0}
& \textbf{96.7} & \textbf{100.0} & \textbf{96.7}
& 83.3 & \textbf{100.0} & \textbf{94.9}
& \textbf{100.0} & \textbf{100.0} & \textbf{100.0}
& \textbf{100.0} & \textbf{100.0} & \textbf{97.5}
& \textbf{96.0} & \textbf{100.0} & \textbf{97.8} \\
\bottomrule
\end{tabular}
}
\caption{Main results (\%). A@1, C@3, and Set F1 denote AnyRoot@1,
CompleteRoots@3, and set-level F1, respectively.}
\label{tab:main-results}
\end{table*}

\begin{table}[t]
\centering
{\footnotesize
\setlength{\tabcolsep}{0.20mm}
\begin{tabular}{@{}lcc@{\hspace{0.4mm}}lcc@{\hspace{0.4mm}}lcc@{}}
\toprule
Method & ES & O-ES$^*$ & Method & ES & O-ES$^*$ & Method & ES & O-ES$^*$ \\
\cmidrule(lr){1-3}\cmidrule(lr){4-6}\cmidrule(lr){7-9}
EasyRCA$^\dagger$ & 33.6 & -- &
CIRCA & 30.5 & 54.1 &
BARO & 41.0 & 38.0 \\
T-RCA$^\dagger$ & 44.3 & -- &
RCG & \underline{81.0} & \textbf{85.9} &
IDI$^\dagger$ & 44.2 & 64.3 \\
AERCA & 29.1 & 32.8 &
Stable. & 70.9 & \underline{69.7} &
MATERO. & \textbf{94.3} & -- \\
AERCA$^\dagger$ & 29.2 & 31.8 &
Smooth. & 32.5 & 33.7 &
& & \\
\bottomrule
\end{tabular}
}
\caption{Macro-averaged projected ES and oracle-cardinality O-ES$^*$ across
five dataset groups (\%).}
\label{tab:exact-set-results}
\end{table}

\subsection{Experimental Setup}
\paragraph{Datasets.}
We evaluate on one real-world industrial dataset,
\textbf{causRCA}~\citep{mehling2026enabling},
three controlled synthetic datasets---\textbf{Temporal Actuator (TA)},
\textbf{Typed Source (TS)}, and the scalability-oriented
\textbf{Large Modular (LM)}---and one
physics-based simulation, \textbf{Quadruple Tank (QT)}. causRCA combines real
normal-operation recordings with labeled hardware-in-the-loop anomaly events and
provides \emph{Probe}, \emph{Coolant}, \emph{Hydraulics}, and \emph{Full}
settings with 11, 15, 17, and 92 nodes (including variables and alarms), respectively. Their test sets contain
34, 25, 41, and 100 events. TA, TS, and LM contain 10, 10, and 206 nodes
and 30, 30, and 50 test events, respectively. QT contains 24 mixed-type nodes
and 35 events, generated from the
published nonlinear quadruple-tank equations~\citep{johansson2000quadruple}.
The provided causal graphs are expert-derived for causRCA, generator-defined for
TA/TS/LM, and physics-derived for QT.

\paragraph{Baselines and Protocols.}
Temporal RCA baselines include EasyRCA$^\dagger$~\citep{assaad2023root},
T-RCA$^\dagger$~\citep{zan2024fly}, and AERCA with its graph-constrained
variant~\citep{han2025root}. General causal and score-based baselines include
CIRCA~\citep{li2022causal}, RCG~\citep{ikram2025root},
StableRCA~\citep{lin2026stablerca}, SmoothTraversal~\citep{orchard2025root},
and BARO~\citep{pham2024baro}. We further include
IDI$^\dagger$~\citep{nagalapatti2025robust} as the closest implemented
set-aware interventional baseline. The $\dagger$ marks method-level variants:
IDI$^\dagger$ replaces the constant binary-alarm target with temporal
alarm-parent contexts, while AERCA$^\dagger$ constrains its learned
Granger-causal coefficients to the provided graph. EasyRCA$^\dagger$ and
T-RCA$^\dagger$ use our adapter to estimate their required anomaly onsets from
event trajectories using normal-only calibration.

Root-set construction considers only ancestors of top-level active alarms.
Ranking methods form sets using Holm's step-down correction~\citep{holm1979simple}
on candidate $p$-values from held-out normal windows ($\alpha=0.01$), while
MATERO-RCA, EasyRCA$^\dagger$, and T-RCA$^\dagger$ retain their set outputs.
Other $\alpha$ values and Benjamini--Hochberg selection~\citep{benjamini1995controlling}
are evaluated in the supplementary material.
Sets are projected as needed by adding the highest-ranked eligible ancestors
until all top-level active alarms are covered. To compute A@1 and C@3 from
MATERO-RCA's $\widehat R^\star$, we rank its members before other variables
and order each group by
$\min_{m\in\{\mathrm{o},\mathrm{p}\}}L(\{r\},m)$. This ranking uses neither
root labels nor the true root count.

For evaluation metrics, we report
\textbf{AnyRoot@1} (top-1 hits a root),
\textbf{CompleteRoots@3} (top-3 covers all roots),
\textbf{Set F1} (mean per-event harmonic mean of reported-set precision and
recall), and
\textbf{ExactSet} (ES; the reported set exactly matches the annotated roots).
For ranking methods, \textbf{oracle-cardinality ES} (O-ES$^*$) forms the set using the
annotated root count and serves only as a top-$k$ comparator unavailable at
test time.
Additional metrics are included in the supplementary material.

\paragraph{Implementation.}
We set $K=3$, $\lambda=0.5$, $\gamma=0.25$, and
$\varepsilon=0.01$, using the $0.99$ and $0.95$ quantiles to
calibrate $e_i^C$ and $e_j^A$, respectively. Eq.~\eqref{eq:trajectory-refinement}
iterates for 50 refinement steps with the AdamW optimizer~\citep{loshchilov2019decoupled}.
All experiments and runtime measurements use an Intel Core i7-12700F CPU and
one NVIDIA RTX 3080 GPU; further implementation details are provided
in the supplementary material.

\afterpage{%
  \begin{figure}[t]
    \centering
    \includegraphics[width=0.9\columnwidth]{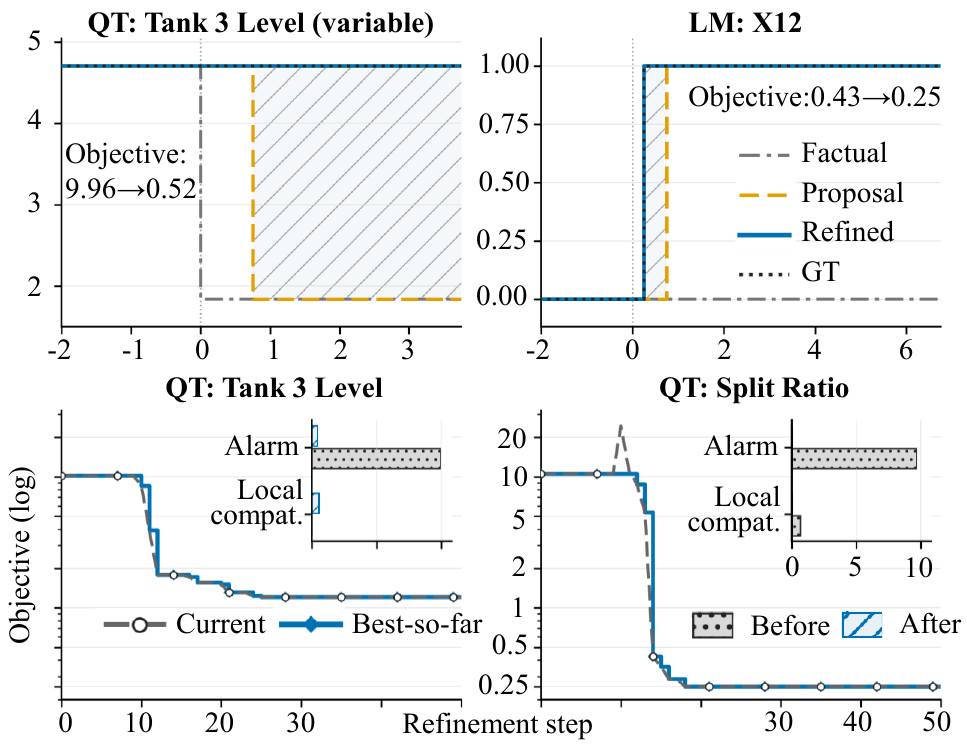}
    \caption{Initial proposal vs.\ refinement. Refinement adjusts trajectories
    and lowers the objective.}
    \label{fig:optimization-behavior-ab}
  \end{figure}
}

\subsection{Main Results}

Tables~\ref{tab:main-results} and~\ref{tab:exact-set-results} report root
ranking, set recovery, and exact-set identification. MATERO-RCA performs
consistently across five dataset groups, achieving 97.8\% mean Set F1 and
94.3\% ES. The gap reflects two challenges.
First, existing methods provide incomplete root evidence. 
RCG, StableRCA, CIRCA, and AERCA rank variables by conditional shifts,
residuals, or learned dynamics, potentially confusing roots with causal
neighbors when observation-only effects appear as relation violations rather
than marginal anomalies. AERCA and IDI rely on fitted temporal or structural mechanisms.
Their lower performance on TA and QT suggests difficulty under stochastic delays and coupled temporal responses,
while BARO and SmoothTraversal use marginal anomaly evidence. 
Even with our onset adapter, EasyRCA$^\dagger$ and T-RCA$^\dagger$ depend on
anomalous-node preidentification, which is difficult for contextual anomalies.
Second, ranking methods do not infer root-set cardinality. Consequently, their
native ES falls markedly below the oracle-cardinality top-$k$ comparator
O-ES$^*$. MATERO-RCA addresses both by jointly inferring root-effect modes and
root sets using learned graph-factored temporal compatibility, yielding the
best performance. By best-baseline Set F1,
Probe is the most challenging causRCA subsystem: the strongest baseline reaches
79.6\%, whereas MATERO-RCA achieves 100.0\%. Per-setting results and additional
metrics are in the supplementary material.

\subsection{Analysis}

\begin{table}[b]
  \centering
  \footnotesize
  \setlength{\tabcolsep}{3pt}
  \renewcommand{\arraystretch}{0.94}
  \begin{tabular*}{\columnwidth}{@{\extracolsep{\fill}}lccc@{}}
    \toprule
    Variant & Set F1 $\uparrow$ & ES $\uparrow$ & Root--mode evals. $\downarrow$ \\
    \midrule
    Full & \textbf{97.8} & \textbf{94.3} & 4.4 \\
    w/o alarm term $\Phi_A$ & 85.1 & 79.1 & 4.8 \\
    w/o compatibility term $\Phi_C$ & 85.7 & 83.0 & 9.3 \\
    w/o gradient refinement & 92.9 & 84.3 & 19.0 \\
    w/o mode $\mathrm{p}$ & 87.5 & 67.8 & 3.6 \\
    w/o mode $\mathrm{o}$ & 94.1 & 89.6 & \textbf{1.4} \\
    \bottomrule
  \end{tabular*}
  \caption{Five-group macro ablation study.}
  \label{tab:component-ablation}
\end{table}

\noindent\textbf{Ablation studies.} Table~\ref{tab:component-ablation}
isolates the objective terms, refinement, and root-effect modes. A root--mode
evaluation is one complete proposal-and-refinement solve for $(R,\mathbf m)$.
Removing either $\Phi_A$ or $\Phi_C$ reduces set recovery, confirming that
alarm resolution and temporal compatibility provide complementary evidence.
Gradient refinement lowers evaluated objective values and tightens $\widehat U_t$,
allowing $B_t>\widehat U_t+\epsilon$ to hold earlier and reducing root--mode
evaluations. Removing mode $\mathrm{p}$ or $\mathrm{o}$ reduces root--mode
evaluations from 4.4 to 3.6 or 1.4, respectively, because fewer mode
assignments are admissible. However, this restriction omits valid root-effect
modes, lowering both Set F1 and ES.

\noindent\textbf{Initial proposal vs.\ refinement.}
RepairNet amortizes initial trajectories from normal data, but does
not optimize them for an event's alarm-context and local-relation
energies. Gradient refinement makes proposals event-directed by minimizing
Eq.~\eqref{eq:counterfactual-energy}. Figure~\ref{fig:optimization-behavior-ab}
shows the resulting trajectory adjustments, which reduce the objective from
10.17 to 1.21 and from 10.57 to 0.25 in two representative events.

\noindent\textbf{Exhaustive vs.\ best-bound search.}
Best-bound search prioritizes the unseen root set with the smallest lower
bound, focusing inner solves on the most competitive root-set hypotheses.
Figure~\ref{fig:search-efficiency} shows median reductions of
$4.0{\times}$--$32.5{\times}$ in root--mode evaluations across four test
groups and a median $20.7{\times}$ runtime speedup on aligned QT events.
For LM, exhaustive search would require $8{,}322$
root--mode evaluations per event versus $4.8$ under best-bound search, a
$1{,}734{\times}$ reduction.

\begin{figure}[t]
  \centering
  \includegraphics[width=0.9\columnwidth]{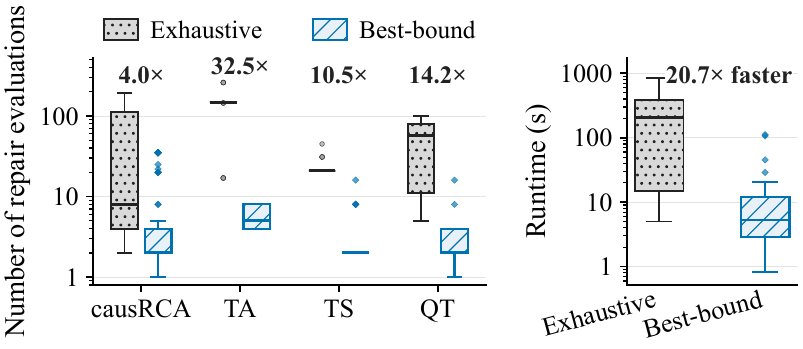}
  \caption{Exhaustive vs.\ best-bound search. Best-bound search reduces root--mode
  evaluations and runtime.}
  \label{fig:search-efficiency}
\end{figure}

\begin{figure}[t]
  \centering
  \includegraphics[width=0.9\columnwidth]{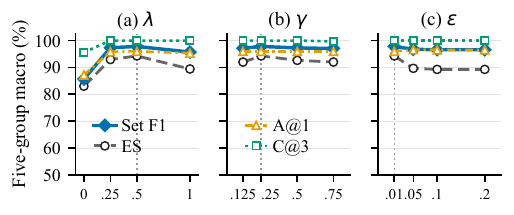}
  \caption{Sensitivity to $\lambda$, $\gamma$, and
  $\varepsilon$. Dotted lines mark defaults; scores are
  five-group macros.}
  \label{fig:objective-sensitivity}
\end{figure}

\noindent\textbf{Deterministic vs.\ stochastic proposals.}
Stochastic proposals in Eq.~\eqref{eq:repairnet-proposals} improve the objective
on only 11 of 345 events (1 causRCA, 3 TA, and 7 QT), tying otherwise, but
raise median runtime from 2.7 to 6.3 seconds. Deterministic initialization
therefore generally suffices, while stochastic proposals remain optional for
multimodal responses.

\noindent\textbf{Hyperparameter sensitivity.}
Figure~\ref{fig:objective-sensitivity} varies $\lambda$, $\gamma$, and
$\varepsilon$ individually. Performance is stable across the tested $\gamma$
values and $\lambda\in[0.25,1]$, whereas $\lambda=0$ degrades both metrics.
In contrast, ES is sensitive to $\varepsilon$, which discretely determines the
descendant scope of physical effects. The conservative default
$\varepsilon=0.01$ admits only near-zero compatibility energies. Larger values
introduce competing explanations and reduce ES more than Set F1.

\noindent\textbf{Additional sensitivities.}
Quantile, graph, and bin-count sensitivities are reported in the supplementary
material.

\FloatBarrier

\noindent\textbf{Failure Analysis.}
MATERO-RCA assumes that physical-propagation effects follow response patterns supported by
normal training data. Seven of its nine ExactSet errors are predefined
out-of-distribution (OOD) events that violate this assumption, which is also
shared by most baselines. Each retains the
annotated roots but adds one causal neighbor, indicating conservative
over-selection beyond learned support. Other cases are detailed in the
supplementary material. 

\section{Conclusion}

We presented MATERO-RCA, which jointly optimizes root sets, auxiliary
root-effect modes, and counterfactual trajectories for industrial RCA.
CompatNet defines graph-factored temporal compatibility, while RepairNet and
gradient refinement construct event-directed trajectories. Residual-cover
best-bound search certifies fixed-oracle outer optimality. Experiments on real
industrial and simulated systems show strong complete-root-set recovery and
substantial search savings. Remaining failures arise from causal-graph
misspecification and physical propagation outside learned normal support,
motivating open-set response models and graph-structure uncertainty.

\clearpage
\makeatletter
\setlength{\@dblfptop}{0pt}
\makeatother

\setcounter{section}{0}
\setcounter{subsection}{0}
\setcounter{equation}{0}
\setcounter{figure}{0}
\setcounter{table}{0}
\setcounter{lemma}{0}
\setcounter{algorithm}{0}
\setcounter{secnumdepth}{2}
\renewcommand{\thesection}{S\arabic{section}}
\renewcommand{\thesubsection}{S\arabic{section}.\arabic{subsection}}
\renewcommand{\theequation}{S\arabic{equation}}
\renewcommand{\thefigure}{S\arabic{figure}}
\renewcommand{\thetable}{S\arabic{table}}
\renewcommand{\thelemma}{S\arabic{lemma}}
\renewcommand{\thealgorithm}{S\arabic{algorithm}}

\section*{Appendix}

\paragraph{Appendix organization.}
This appendix follows the technical and empirical progression of the main
paper. Sections~\ref{app:compatnet}--\ref{app:certified-search} expand the
method, covering temporal compatibility modeling, mode-aware counterfactual
trajectory optimization, network architectures, and certified root-set search.
Section~\ref{app:dataset-details} documents the datasets, causal graphs, event
representation, and synthetic-data generation process, while
Section~\ref{app:temporal-baselines} specifies the temporal baseline protocol.
Section~\ref{app:additional-results} collects the remaining empirical evidence:
ranking diagnostics, setting-level causRCA results, calibration,
discretization, causal-graph sensitivity, and failure analysis.
Throughout, we retain the main-paper notation, refer to its equation numbers
explicitly, and introduce only symbols needed for architecture, training,
generation, and proofs.

\section{Graph-Factored Temporal Compatibility}
\label{app:compatnet}

This section specifies the mixed-type likelihood heads, local likelihood
factorizations, and relation-breaking training objective omitted from the main
paper. The corresponding architectures appear in
Section~\ref{app:architectures}. Every modeled alarm has at least one observed
causal parent. Local-relation and RepairNet modules are defined only for
non-source temporal variables.

\subsection{Mixed-Type Likelihoods}
\label{app:discrete-support}

Industrial event trajectories mix continuous measurements with binary and
categorical states \citep{wang2023hybrid}. Bernoulli and categorical heads
directly model the discrete types; continuous signals require a finite
representation to share the likelihood interface.

\noindent\textbf{Continuous-signal discretization.} Each continuous signal is mapped to
ordered bins fixed from normal training data. This mirrors industrial operating
regions defined by alarm thresholds and deadbands \citep{weng2022evidence}.
Let $C_i$ be the resulting support size and
$y_{i,t}=\mathsf Q_i(x_{i,t})$ the algorithmic state. For discrete variables,
$\mathsf Q_i$ returns the native state index; for continuous variables,
\begin{equation}
\begin{aligned}
\mathsf Q_i(x)=c
&\quad\Longleftrightarrow\quad
b_{i,c}\le x<b_{i,c+1},\\[-1pt]
&\qquad c=0,\ldots,C_i-1,
\end{aligned}
\label{eq:supp-ordered-binning}
\end{equation}
with $-\infty=b_{i,0}<\cdots<b_{i,C_i}=+\infty$. The Temporal Compatibility
Network (CompatNet), Counterfactual Repair Network (RepairNet), and refinement
apply $\mathsf Q_i$ coordinate-wise and therefore share the same discrete
support. Thus $\mathbf y=\mathsf Q(\mathbf x)$ is the internal discrete
representation of $\mathbf x$ in Eq.~(3) of the main paper, and $p_\theta$ below
is its induced mass on this support. Decoding uses interior-bin midpoints; exterior bins extrapolate
half the nearest finite-edge gap, or half a raw unit when only one edge exists.

\noindent\textbf{Likelihood heads.} Given context $\mathbf h_{i,t}^C$, the typed
head outputs a Bernoulli probability $\pi_{\theta,i,t}$ or a categorical
probability vector $\boldsymbol\pi_{\theta,i,t}$. The observed-state likelihood is
\begin{equation}
p_{\theta}(y_{i,t}\mid\mathbf h_{i,t}^C)=
\begin{cases}
\pi_{\theta,i,t}^{y_{i,t}}(1-\pi_{\theta,i,t})^{1-y_{i,t}},
& X_i\text{ binary},\\
\boldsymbol\pi_{\theta,i,t}[y_{i,t}],
& X_i\text{ nonbinary}.
\end{cases}
\label{eq:supp-typed-heads}
\end{equation}
Equation~\eqref{eq:supp-typed-heads} unifies the signal types: its first branch
scores the observed binary state, while its second scores the native category
or occupied continuous bin. Thus
$-\log p_{\theta}(y_{i,t}\mid\mathbf h_{i,t}^C)$ is the typed conditional
negative log-likelihood used by the local-relation energies; continuous
responses are scored by learned bin probability rather than raw-unit distance
from a point prediction.

\subsection{Local-Relation Likelihoods}
\label{app:local-likelihoods}

We model a local causal relation only when
$\operatorname{pa}(X_i)\ne\varnothing$. For relation
$\operatorname{pa}(X_i)\to X_i$, let $\mathcal B_i$ index
$H_i$-step segments covering the event window, and let $\mathsf E_i^C$ and
$\mathsf E_{i,k}^C$
embed the child and parent states. A relation-specific gated recurrent unit
(GRU) \citep{cho2014learning} computes
\begin{equation}
\begin{aligned}
\mathbf h_{i,a,0}^C
&=\operatorname{Init}_{\theta,i}\!\left(
\left[\mathsf E_i^C(y_{i,a});
\bigl(\mathsf E_{i,k}^C(y_{k,a})\bigr)_
{X_k\in\operatorname{pa}(X_i)}\right]
\right),\\
\mathbf h_{i,a,h}^C
&=\operatorname{GRU}_{\theta,i}\!\left(
\mathbf h_{i,a,h-1}^C,
\bigl[\mathsf E_{i,k}^C(y_{k,a+h})\bigr]_
{X_k\in\operatorname{pa}(X_i)}
\right).
\end{aligned}
\label{eq:supp-rollout-state}
\end{equation}
Parent-indexed tuples are concatenated in a fixed graph order.
Future child states supervise
$p_{\theta}(y_{i,a+h}\mid\mathbf h_{i,a,h}^C)$ but are not recurrent inputs.
Thus $y_{i,a}$ is a boundary condition, not a self-loop in $G$, and the segment
likelihood in Eq.~(3) of the main paper factorizes as
\begin{equation}
\begin{aligned}
&p_{\theta}\!\left(\mathbf y_{i,a+1:a+H_i}\mid y_{i,a},
\mathbf y_{\operatorname{pa}(X_i),a:a+H_i}\right)\\
&\qquad=\prod_{h=1}^{H_i}
p_{\theta}(y_{i,a+h}\mid\mathbf h_{i,a,h}^C).
\end{aligned}
\label{eq:supp-rollout-energy}
\end{equation}
Averaging its normalized negative log over $\mathcal B_i$ yields the energy in
Eq.~(3) of the main paper; overlapping anchors cover longer events without a single deterministic
rollout.

The full-window branch instead uses a relation-specific temporal convolutional
network (TCN) $\mathsf K_i^C$ \citep{bai2018empirical}. Dilated residual
convolutions encode the aligned parent-state sequence at each time step and
score $t=0,\ldots,T-1$ without a child boundary condition:
\begin{equation}
e_i^C(\mathbf x)=-\frac1T\sum_{t=0}^{T-1}
\log p_{\theta}(y_{i,t}\mid\mathbf h_{i,t}^C).
\label{eq:supp-window-energy}
\end{equation}
Equation~\eqref{eq:supp-compat-routing} specifies this full-window context.
The TCN is parameterized independently of the GRU; source variables have no
parent-conditioned local causal relation.
Each non-source relation is routed to exactly one local-relation bank:
configured state-conditioned relations use
Eq.~\eqref{eq:supp-rollout-energy}, and the remainder use
Eq.~\eqref{eq:supp-window-energy}. Hence $e_i^C$ denotes one routed energy per
relation and is never double counted.

\subsection{Likelihood Learning}
\label{app:compatnet-training}

CompatNet trains each local-relation or alarm-context energy
$e\in\{e_i^C,e_j^A\}$ on normal windows.

\noindent\textbf{Negative samples.} For a local causal relation,
$\mathbf x^{\mathrm{neg}}$ retains the
parent trajectories and applies a temporal shift, hold, segment swap, or short
pulse to the future child. For an alarm-parent context, a parent subset receives a
large lag, transition hold, or stable-segment corruption, disrupting the joint
temporal state. These relation-breaking samples sharpen compatibility
boundaries rather than emulate physical anomaly propagation
\citep{carmona2022neural}. For either energy, the objective is
\begin{equation}
\mathcal L_c
=\mathbb E_{\mathbf x\sim\mathcal D_N}\!\left[
e(\mathbf x)+\zeta[\chi+e(\mathbf x)-e(\mathbf x^{\mathrm{neg}})]_+\right].
\label{eq:supp-compat-training}
\end{equation}
Here $\chi>0$ is the ranking margin and $\zeta\geq0$ weights the negative
term. CompatNet sums this loss over all energies; $\zeta=0$ recovers ordinary
maximum-likelihood training. We use $(\chi,\zeta)=(0.5,0.2)$ for local
causal relations and $(1,1)$ for alarm-context energies.

\section{Mode-Aware Counterfactual Trajectory Optimization}
\label{app:repair}

This section defines the local proposal losses and alarm-context gradients
omitted from the main paper.

\subsection{Local Multi-Proposal Learning}
\label{app:repairnet-learning}

Using the proposal logits $\boldsymbol\eta_i^{(\ell)}$ from Eq.~(6) of the main
paper, the three losses are
\begin{equation}
\begin{aligned}
\mathcal L_r(\widetilde{\mathbf x}^{(1)},\mathbf x)
&=\frac1T\sum_{t=0}^{T-1}
\operatorname{CE}_i(\boldsymbol\eta_{i,t}^{(1)},y_{i,t}),\\
w^{(\ell)}
&=
\frac{\exp[-J(\widetilde{\mathbf x}^{(\ell)})/\kappa]}
{\sum_{\ell'=1}^{L}\exp[-J(\widetilde{\mathbf x}^{(\ell')})/\kappa]},\\
\mathcal L_e
&=\sum_{\ell=1}^{L}w^{(\ell)}
J(\widetilde{\mathbf x}^{(\ell)}),\\
\mathcal L_d
&=-\frac{2}{L(L-1)T}
\sum_{\ell<\ell'}\sum_{t=0}^{T-1}
\left|\widetilde x_{i,t}^{(\ell)}
-\widetilde x_{i,t}^{(\ell')}\right|.
\end{aligned}
\label{eq:supp-proposal-risk}
\end{equation}
Here $\operatorname{CE}_i$ is Bernoulli cross-entropy for a binary $X_i$ and
categorical cross-entropy on the native or binned support otherwise.
$\kappa>0$ is the Gibbs temperature. During normal-window training, $J$
includes all $A_j\in\mathcal A$, as specified after Eq.~(5) of the main paper.
$\mathcal L_e$ is a Gibbs-weighted risk rather than a log-sum-exp soft minimum,
and $\mathcal L_d$ is the negative mean pairwise distance between decoded
proposal trajectories. We use $\kappa=0.05$, $\alpha=1$, and $\beta=0.01$ in
Eq.~(7) of the main paper.

\subsection{Objective-Directed Refinement}
\label{app:refinement}

Local proposals are composed in graph order, so proposed parents condition
their children. Refinement then applies Eq.~(8) of the main paper only to logits associated with
$\mathcal M(R,\mathbf m)$; hard forward trajectories remain in
$\mathcal F(R,\mathbf m)$.

\noindent\textbf{Straight-through decoding.} For a binary scalar logit $\xi$ or
a nonbinary logit vector $\boldsymbol\xi$, define
\begin{equation}
\begin{aligned}
\boldsymbol\pi_{\tau_s}(\boldsymbol\xi)
&=
\begin{cases}
\bigl(1-\sigma(\xi/\tau_s),\,\sigma(\xi/\tau_s)\bigr)^\top,
& X_i\text{ binary},\\
\operatorname{softmax}(\boldsymbol\xi/\tau_s),
& X_i\text{ nonbinary},
\end{cases},\\
\mathbf y^{\mathrm{hard}}(\boldsymbol\xi)
&=
\begin{cases}
\operatorname{onehot}\!\left(\mathbf 1[\xi\geq0]\right),
& X_i\text{ binary},\\
\operatorname{onehot}(\arg\max_c\xi_c),
& X_i\text{ nonbinary},
\end{cases},\\
\widehat{\mathbf y}
&=\operatorname{sg}\!\left(
\mathbf y^{\mathrm{hard}}(\boldsymbol\xi)
-\boldsymbol\pi_{\tau_s}(\boldsymbol\xi)\right)
+\boldsymbol\pi_{\tau_s}(\boldsymbol\xi).
\end{aligned}
\label{eq:supp-ste}
\end{equation}
Here $\operatorname{sg}$ is stop-gradient. Thus
$\widehat{\mathbf y}=\mathbf y^{\mathrm{hard}}(\boldsymbol\xi)$ in the forward
pass, while its gradient is that of $\boldsymbol\pi_{\tau_s}$. The hard state is
decoded to its native value or fixed continuous-bin representative, yielding the trajectory
$\widetilde{\mathbf x}_s^{(\ell)}$ in Eq.~(8) of the main paper. Applied coordinate-wise,
Eq.~\eqref{eq:supp-ste} is $\operatorname{ST}_{\tau_s}$. The 50-step AdamW refinement uses
learning rate $0.2$, with $\tau_s$ linearly annealed from $1$ to $0.25$.

\noindent\textbf{Local-relation gradients.} Let
$\widehat{\mathbf y}_{i,t}$ be the STE state vector. The differentiable
embedding and target loss are
\begin{equation}
\begin{aligned}
\mathsf E(\widehat{\mathbf y}_{i,t})
&=\sum_c\widehat y_{i,t,c}\mathsf E(c),\\
\bar e_{i,t}
&=-\sum_c\widehat y_{i,t,c}
\log p_\theta(c\mid\mathbf h_{i,t}^C).
\end{aligned}
\label{eq:supp-local-relation-gradient}
\end{equation}
Here $\mathsf E$ is the applicable typed embedding in
Eq.~\eqref{eq:supp-rollout-state} or~\eqref{eq:supp-compat-routing}. The first
expression handles a mutable input, and the second a mutable child target.
Their hard forward values recover the likelihoods in
Section~\ref{app:local-likelihoods}, while Eq.~\eqref{eq:supp-ste} supplies the
gradients. Substituting these quantities into
Eqs.~\eqref{eq:supp-rollout-energy} and~\eqref{eq:supp-window-energy} yields the
surrogate local energy $\bar e_i^C$.

\noindent\textbf{Alarm-context gradients.} When parents of an active alarm are
refined, let $\widehat{\mathbf y}_{k,t}$ be the STE state of
$X_k\in\operatorname{pa}(A_j)$ (with proposal and refinement-step indices
suppressed). For joint configuration
$\mathbf c=(c_k)_{X_k\in\operatorname{pa}(A_j)}$, its STE mass is
\begin{equation}
\nu_{j,t}(\mathbf c)=
\prod_{X_k\in\operatorname{pa}(A_j)}
\widehat y_{k,t,c_k}.
\label{eq:supp-joint-relaxation}
\end{equation}
Writing $\boldsymbol\nu_{j,t}=(\nu_{j,t}(\mathbf c))_{\mathbf c}$, the
alarm-context surrogate is
\begin{equation}
\bar e_j^A=-\frac1T\sum_{t=0}^{T-1}\sum_{\mathbf c}
\nu_{j,t}(\mathbf c)
\log q_{\psi}(\mathbf c\mid\boldsymbol\nu_{j,0:t-1}).
\label{eq:supp-relaxed-alarm-energy}
\end{equation}
This is the multilinear STE extension of the same learned categorical
$q_\psi$ as in Eq.~(4) of the main paper. The forward pass uses hard one-hot
parent states and exactly recovers $e_j^A$; relaxed masses are used only to
differentiate with respect to parent logits.

\noindent\textbf{Refinement update.} The surrogate energies in
Eqs.~\eqref{eq:supp-local-relation-gradient}
and~\eqref{eq:supp-relaxed-alarm-energy} form the update below. Reinstating
the suppressed indices, let \(\bar e_{i,s}^{C,(\ell)}\) and
\(\bar e_{j,s}^{A,(\ell)}\) denote these energies for proposal \(\ell\) at
refinement step \(s\):
\begin{equation}
\begin{aligned}
\bar J_s^{(\ell)}
&=\sum_{A_j\in\mathcal I\cap\mathcal A^+}
[\bar e_{j,s}^{A,(\ell)}-\tau_j^A]_+\\
&\quad+\lambda\sum_{X_i\in\mathcal I\cap\mathcal X}
[\bar e_{i,s}^{C,(\ell)}-\tau_i^C]_+,\\
\boldsymbol\xi_{s+1}^{(\ell)}
&=\boldsymbol\xi_s^{(\ell)}
-\eta_s\nabla_{\boldsymbol\xi_s^{(\ell)}}\bar J_s^{(\ell)}.
\end{aligned}
\label{eq:supp-refinement-update}
\end{equation}
Hard forward decoding gives
$\bar J_s^{(\ell)}=J(\widetilde{\mathbf x}_s^{(\ell)})$ numerically; the bar
only marks the soft backward derivative. AdamW uses this gradient, whereas
proposal selection and final scoring use the original hard objective $J$.

\section{Neural Network Architectures}
\label{app:architectures}

This section records the parameter-sharing and routing choices not implied by
standard GRU and dilated residual TCN blocks
\citep{cho2014learning,bai2018empirical}. Full-window and RepairNet TCNs use
symmetric padding; the alarm-context TCN uses left padding for causal scoring.
Figure~\ref{fig:supp-neural-architectures} summarizes both architectures
and their routing boundaries.

\begin{figure*}[!t]
  \centering
  \includegraphics[width=0.99\textwidth]{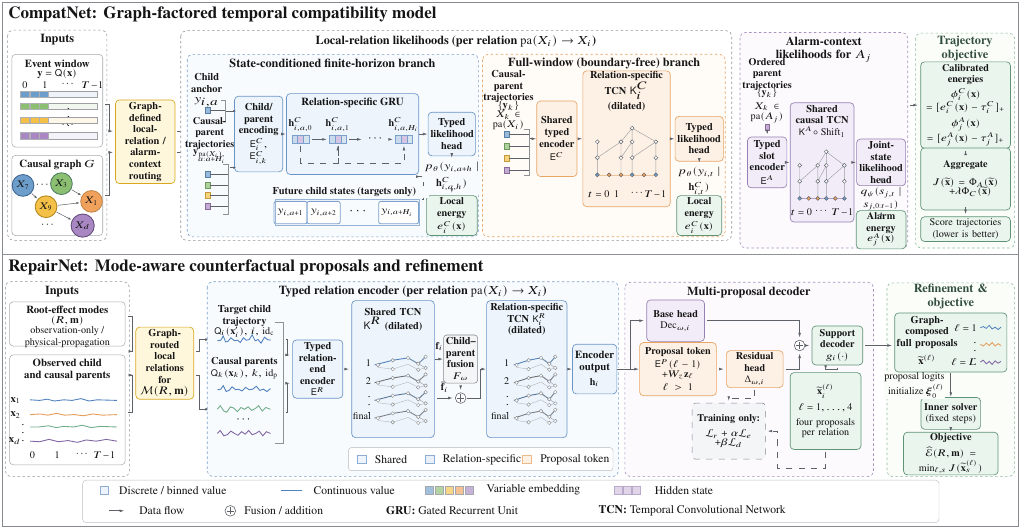}
  \caption{CompatNet and RepairNet architectures. Top: graph-routed
  local-relation banks and the alarm-context branch produce calibrated energies
  that form $J(\widetilde{\mathbf x})$. Bottom: $(R,\mathbf m)$ determines
  $\mathcal M(R,\mathbf m)$, and four full proposals initialize refinement,
  yielding the fixed-inner-solver approximate value
  $\widehat{\mathcal E}(R,\mathbf m)
  =\min_{\ell,s}J(\widetilde{\mathbf x}_s^{(\ell)})$.
  Dashed paths denote training-only supervision.}
  \label{fig:supp-neural-architectures}
  \label{fig:supp-compatnet}
  \label{fig:supp-repairnet}
\end{figure*}

\subsection{CompatNet}
\label{app:compatnet-architecture}

\paragraph{State-conditioned local-relation bank.}
Each local causal relation routed to this bank has an independently
parameterized GRU and typed likelihood head implementing
Eq.~\eqref{eq:supp-rollout-state}. Child and parent state
embeddings initialize and drive the recurrence; future child states remain
likelihood targets and are never fed back.

\paragraph{Full-window local-relation bank.}
This branch implements Eq.~\eqref{eq:supp-window-energy} with its own shared
typed encoder $\mathsf E^C$ and a relation-specific TCN. For parent $X_k$,
$\mathsf E^C(\mathbf y_k,k)$ encodes its state sequence, identity, and support
type; the resulting context is
\begin{equation}
\mathbf h_i^C=
\mathsf K_i^C\!\left(
\frac{\sum_{X_k\in\operatorname{pa}(X_i)}\mathsf E^C(\mathbf y_k,k)}
{|\operatorname{pa}(X_i)|}\right).
\label{eq:supp-compat-routing}
\end{equation}
Thus $i$ selects an independently parameterized TCN rather than acting as a
learned embedding. At time $t$, the relation-specific head applies
Eq.~\eqref{eq:supp-typed-heads} to $\mathbf h_{i,t}^C$, yielding
$\pi_{\theta,i,t}$ or $\boldsymbol\pi_{\theta,i,t}$.

\paragraph{Alarm-context branch.}
For $X_k\in\operatorname{pa}(A_j)$, let $\mathsf E^A(\mathbf y_k,k)$ encode
its state sequence, identity, support type, and fixed slot. After aggregation,
$\operatorname{Shift}_1$ right-shifts the sequence by one step and prepends a
learned initial vector before causal encoding:
\begin{equation}
\mathbf h_j^A=
\mathsf K^A\!\left(
\operatorname{Shift}_1\!\left[
\frac{1}{|\operatorname{pa}(A_j)|}
\sum_{X_k\in\operatorname{pa}(A_j)}
\mathsf E^A(\mathbf y_k,k)\right]\right).
\label{eq:supp-alarm-routing}
\end{equation}
The causal TCN $\mathsf K^A$ is shared, while the alarm-specific categorical
head supplies $q_\psi(s_{j,t}\mid s_{j,0:t-1})$ in
Eq.~(4) of the main paper over
$\prod_{X_k\in\operatorname{pa}(A_j)}C_k$ joint states. The shift keeps
the current target out of the input; the ordered parents and head identify the
alarm without an alarm-ID embedding.

\subsection{RepairNet}
\label{app:repairnet-architecture}

\paragraph{Typed relation encoder.}
The encoder $\mathsf E^R$ combines a trajectory with its variable identity,
support type, and a learned relation-end identity. Let
$\mathrm{id}_{\mathrm c}$ and $\mathrm{id}_{\mathrm p}$ denote the child- and
parent-end identities, distinct from variable identities $i$ and $k$. A shared
TCN $\mathsf K^R$ processes each token independently. For local causal relation
$(X_i,\operatorname{pa}(X_i))$, define
\begin{equation}
\begin{aligned}
\mathbf f_i
&=\mathsf K^R\!\left(
\mathsf E^R(\mathsf Q_i(\mathbf x_i'),i,\mathrm{id}_{\mathrm c})\right),\\
\bar{\mathbf f}_i
&=\frac{1}{|\operatorname{pa}(X_i)|}
\sum_{X_k\in\operatorname{pa}(X_i)}
\mathsf K^R\!\left(
\mathsf E^R(\mathsf Q_k(\mathbf x_k),k,\mathrm{id}_{\mathrm p})\right),\\
\mathbf h_i
&:=\mathsf K_i^R\!\left(
\operatorname{LN}\!\left[
\operatorname{LN}(\mathbf f_i)+\bar{\mathbf f}_i+
F_\omega[\mathbf f_i;\bar{\mathbf f}_i]\right]\right).
\end{aligned}
\label{eq:supp-repair-encoder}
\end{equation}
Here $F_\omega$ is a two-layer MLP. This construction instantiates
$\operatorname{Enc}_\omega$ in Eq.~(6) of the main paper, and child identity
$i$ selects $\mathsf K_i^R$ for relation
$\operatorname{pa}(X_i)\to X_i$.

\paragraph{Multi-proposal heads.}
The deterministic $\operatorname{Dec}_{\omega,i}$ in Eq.~(6) of the main paper
is a relation-specific typed linear projection with the output dimensions in
Eq.~\eqref{eq:supp-typed-heads}. For $\ell>1$,
$\Delta_{\omega,i}$ receives
$\operatorname{LN}[\mathbf h_i+\mathsf E^P(\ell-1)+W_z\mathbf z_\ell]$, applies two
relation-specific kernel-$5$ TCN blocks with dilations $(1,2)$, and projects to
the same typed output.

$\mathsf E^P$ is a learned proposal-index embedding and $W_z$ projects
proposal noise. Both are shared, whereas
$\operatorname{Dec}_{\omega,i}$ and $\Delta_{\omega,i}$ are relation-specific.
Each $\mathbf z_\ell\sim\mathcal N(\mathbf0,I_8)$ is sampled once per window
and broadcast over time, where $I_8$ is the $8\times8$ identity. We use $L=4$
and decoder $g_i$ from Eq.~(6) of the main paper.

\subsection{Dimensions and Training}
\label{app:network-training}

\noindent\textbf{Architecture and routing.}
Typed heads retain the support-dependent output sizes in
Eq.~\eqref{eq:supp-typed-heads}; the remaining settings are:
\begin{center}
\footnotesize
\setlength{\tabcolsep}{2pt}
\renewcommand{\arraystretch}{1.04}
\begin{tabular}{@{}>{\raggedright\arraybackslash}p{0.23\columnwidth}
                    >{\raggedright\arraybackslash}p{0.26\columnwidth}
                    >{\raggedright\arraybackslash}p{0.43\columnwidth}@{}}
\toprule
Path & Assigned relations & Width and temporal block \\
\midrule
State-conditioned bank
& Configured state-conditioned relations; precedence on overlap
& 8-d embeddings; 32-d relation-specific GRU \\
Full-window bank
& Remaining non-source relations
& 128-d encoder/state; kernel-5 TCN with $(1,2,4)$ \\
Alarm-context branch
& Joint parent context of each alarm
& 64-d encoder/state; causal kernel-5 TCN with
  $(1,2,4,8,16,32)$ \\
RepairNet
& One relation-specific decoder per target child
& 64-d encoder; 8/64-d proposal conditioning; pre/post-fusion
  dilations $(1,2,4)$ and $(1,2,4,8,16,32)$ \\
\bottomrule
\end{tabular}
\end{center}

\noindent\begin{minipage}{\columnwidth}
\noindent\textbf{Training protocol.}
Normal sequences are split before extracting overlapping windows or rollout
segments. Rows are executed top to bottom; refinement is inference-only.
\begin{center}
\footnotesize
\setlength{\tabcolsep}{2pt}
\renewcommand{\arraystretch}{1.02}
\begin{tabular}{@{}>{\raggedright\arraybackslash}p{0.18\columnwidth}
                    >{\raggedright\arraybackslash}p{0.34\columnwidth}
                    >{\centering\arraybackslash}p{0.19\columnwidth}
                    >{\centering\arraybackslash}p{0.18\columnwidth}@{}}
\toprule
Stage & Training unit; validation criterion & Batch / epochs
& AdamW lr/wd \\
\midrule
Full-window bank
& Relation--window; validation NLL; relation-balanced
& 6144/80 & $10^{-3}/10^{-4}$ \\
State-conditioned bank
& Relation--segment; macro validation NLL; regime-balanced
& 512/100 & $10^{-3}/10^{-3}$ \\
Alarm-context branch
& Window; validation NLL
& 256/60 & $10^{-3}/10^{-4}$ \\
Calibrate and freeze
& Held-out energies; empirical relation- and alarm-specific quantiles
& -- & -- \\
RepairNet
& Pseudo-corrupted target--relation window; validation proposal criterion
& 1024/50 & $10^{-3}/10^{-4}$ \\
Refinement
& Event-time optimization of Eq.~(1) of the main paper
& -- & Inference only \\
\bottomrule
\end{tabular}
\end{center}
Fault labels, root annotations, and test data or metrics are excluded from
training, calibration, and checkpoint selection.
\end{minipage}

\noindent\begin{minipage}{\columnwidth}
\noindent\textbf{Dataset-specific settings.}
Entries list only overrides; pairs denote batch size/max.\ epochs.
\begin{center}
\footnotesize
\setlength{\tabcolsep}{2pt}
\renewcommand{\arraystretch}{1.04}
\begin{tabular}{@{}>{\raggedright\arraybackslash}p{0.20\columnwidth}
                    >{\raggedright\arraybackslash}p{0.72\columnwidth}@{}}
\toprule
Dataset & Overrides \\
\midrule
LM
& Full-window $2752/100$; alarm-context branch $256/200$;
  RepairNet $768/30$ \\
TS
& RepairNet $1024/100$ \\
QT
& RepairNet $256/100$; four tank levels use the state-conditioned bank
  ($30$-s horizon, $5$-s stride); post-fusion dilations $(1,2,4)$ \\
causRCA Full
& RepairNet $256/100$ with lr \(3{\times}10^{-4}\) \\
\bottomrule
\end{tabular}
\end{center}
\end{minipage}

\section{Certified Residual-Cover Best-Bound Search}
\label{app:certified-search}

This section completes the theoretical development of the search in the main
paper. It fixes the event-specific mode space and inner solver, proves
Proposition~1, gives the exact binary encoding required by Proposition~2, and
then proves Theorem~1 and Corollary~1. The final subsection records the
conditions under which the implementation reports a certificate.

\subsection{Fixed Mode Space, Inner Solver, and Lower Bound}
\label{app:solver-definitions}

We retain $\mathcal R_0$, $\mathcal M_{r,m}$, $S_v$, $c_v$, $L(R,\mathbf m)$,
$\mathcal E$, $\widehat{\mathcal E}$, $U$, and $\widehat U$ from the main
paper. Here \(\mathcal E\) is the ideal constrained inner value and \(U\) is
the ideal outer root--mode objective, whereas
$\widehat{\mathcal E}(R,\mathbf m)
=J(\widehat{\mathbf x}_{R,\mathbf m})$ and
$\widehat U(R,\mathbf m)=\widehat{\mathcal E}(R,\mathbf m)+\gamma|R|$ are their fixed-inner-solver
counterparts. Here $\varepsilon$ is the propagation
threshold in $\mathcal M_{r,m}$, whereas $\epsilon$ is the outer-search
optimality tolerance. For each event, the scopes
$\mathcal M_{r,m}$ are computed from the observed trajectory and then frozen
throughout search. The admissible root--mode actions are exactly those enabled by
the inference configuration, each with its frozen scope $\mathcal M_{r,m}$.
The MILP and inner solver use the same actions; both modes are enabled in the
reported runs.

Fixing the checkpoints, inference settings, realized proposal-noise draws, and
optimizer schedule makes the inner solver deterministic. It returns
$\widehat{\mathbf x}_{R,\mathbf m}\in\mathcal F(R,\mathbf m)$, and a root set is
excluded only after the solver evaluates all its admissible mode assignments.
Indexed proposal-noise draws are fixed before search and reused whenever an
assignment is evaluated.
One \emph{root--mode evaluation} is one inner-solver call for a single
$(R,\mathbf m)$; a root-set evaluation is complete only after all admissible
$\mathbf m$ have been evaluated.
For $v\in\mathcal I$, the corresponding energy depends only on $S_v$. If
$S_v\cap\mathcal M(R,\mathbf m)=\varnothing$, feasibility fixes every argument
of that energy, so its observed contribution $c_v$ is retained. Otherwise, the
residual-cover bound replaces the potentially changeable, nonnegative
contribution by zero.

\begin{proof}[Proof of Proposition 1]
Fix $(R,\mathbf m)$ and any
$\widetilde{\mathbf x}\in\mathcal F(R,\mathbf m)$. If
$S_v\cap\mathcal M(R,\mathbf m)=\varnothing$, then
$\widetilde{\mathbf x}_{S_v}=\mathbf x_{S_v}$, so the corresponding energy term
contributes $c_v$. Every remaining calibrated energy term is nonnegative.
Summing the retained terms and the exact cardinality term gives
$J(\widetilde{\mathbf x})+\gamma|R|\ge L(R,\mathbf m)$. Minimizing over
$\mathcal F(R,\mathbf m)$ proves $L(R,\mathbf m)\le U(R,\mathbf m)$. The fixed
inner-solver output is feasible, so
$U(R,\mathbf m)\le\widehat U(R,\mathbf m)$.
\end{proof}

\subsection{Exact MILP Encoding}
\label{app:residual-cover}

For $r\in\mathcal C$ and $m\in\{\mathrm{o},\mathrm{p}\}$, binary
$\upsilon_{r,m}$ selects a root--mode action,
$\upsilon_r=\upsilon_{r,\mathrm{o}}+\upsilon_{r,\mathrm{p}}$ indicates root
membership, and $\iota_v$ retains $c_v$. Inadmissible root--mode actions are fixed
to zero. At iteration $t$, the exact binary encoding of Eq.~(10) of the main paper is
\begingroup
\small
\begin{equation}
\begin{aligned}
\min_{\upsilon,\iota}\quad&
\gamma\!\sum_{r\in\mathcal C}\upsilon_r+
\sum_{v\in\mathcal I}c_v\iota_v\\
\mathrm{s.t.}\quad&
\iota_v+\!\!\sum_{\substack{r\in\mathcal C,\,m\in\{\mathrm{o},\mathrm{p}\}:\\
S_v\cap\mathcal M_{r,m}\ne\varnothing}}
\upsilon_{r,m}\geq1,\quad\forall v\in\mathcal I,\\
&\upsilon_r=\upsilon_{r,\mathrm{o}}+\upsilon_{r,\mathrm{p}},
\quad\forall r\in\mathcal C,\\
&1\leq\sum_{r\in\mathcal C}\upsilon_r\leq K,\\
&\sum_{r\in\operatorname{an}(A_j)}\upsilon_r\geq1,\quad
\forall A_j\in\mathcal A^\top,\\
&\sum_{r\in R_s}(1-\upsilon_r)+
\sum_{r\in\mathcal C\setminus R_s}\upsilon_r\geq1,
\quad\forall s<t,\\
&\upsilon_{r,m}=0,\quad
\forall (r,m)\text{ inadmissible},\\
&\upsilon_{r,m},\upsilon_r,\iota_v\in\{0,1\}.
\end{aligned}
\label{eq:supp-residual-cover-milp}
\end{equation}
\endgroup

\begin{lemma}[Exact retained-energy objective]
Fix a feasible $(R,\mathbf m)$ and its corresponding $\upsilon$ variables.
Minimizing the calibrated-energy part of
Eq.~\eqref{eq:supp-residual-cover-milp} over $\iota$ yields the retained-energy
sum in $L(R,\mathbf m)$.
\end{lemma}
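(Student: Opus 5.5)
The plan is to fix $(R,\mathbf m)$ and the induced binary actions: $\upsilon_{r,m_r}=1$ for $r\in R$ and all other $\upsilon_{r,m}=0$, so that $\upsilon_r=\mathbf 1[r\in R]$. The feasibility conditions in Eq.~\eqref{eq:supp-residual-cover-milp} that involve only $\upsilon$ are then taken as given. With these fixed, the remaining problem over $\iota$ is separable: the objective $\sum_{v\in\mathcal I}c_v\iota_v$ is a sum of terms in distinct variables. Moreover, each $\iota_v$ appears in exactly one constraint, namely the cover constraint for $v$. The minimization therefore decomposes into $|\mathcal I|$ independent one-variable binary problems, and I would solve each one separately.

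The key step is to identify the coverage sum $\sigma_v:=\sum_{(r,m):\,S_v\cap\mathcal M_{r,m}\ne\varnothing}\upsilon_{r,m}$. Since exactly the actions $(r,m_r)$ with $r\in R$ are switched on, $\sigma_v$ counts the roots $r\in R$ whose frozen scope $\mathcal M_{r,m_r}$ meets $S_v$. Because $\mathcal M(R,\mathbf m)=\bigcup_{r\in R}\mathcal M_{r,m_r}$, we get $\sigma_v\ge1$ if and only if $S_v\cap\mathcal M(R,\mathbf m)\ne\varnothing$. The constraint is $\iota_v\ge 1-\sigma_v$, and $\iota_v\in\{0,1\}$. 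In the first case, $S_v\cap\mathcal M=\varnothing$, so $\sigma_v=0$ and $\iota_v=1$ is forced, which contributes $c_v$. In the second case the constraint is slack, and $\iota_v=0$ is optimal because $c_v\ge0$. Nonnegativity holds because calibrated energies are hinge values and $\lambda\ge0$. If $c_v=0$, either choice gives the same value. The minimum for each $v$ is therefore $c_v\,\mathbf 1[S_v\cap\mathcal M(R,\mathbf m)=\varnothing]$.

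Summing over $v\in\mathcal I$ gives exactly the retained-energy sum in Eq.~\eqref{eq:residual-cover-bound}. The cardinality term $\gamma\sum_r\upsilon_r=\gamma|R|$ does not depend on $\iota$, so it drops out of the claim, and it will be reattached when Proposition~2 is proved. The only delicate point is the nonnegativity of $c_v$, which is what makes the choice $\iota_v=0$ optimal rather than merely feasible. I would state this explicitly by invoking the hinge in Eq.~\eqref{eq:compatnet-calibration}.

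A second point needs care. The MILP uses the scopes $\mathcal M_{r,m}$ computed from the observed event and frozen for the whole search, as stated in Section~\ref{app:solver-definitions}. The scope $\mathcal M(R,\mathbf m)$ used in $L$ is the same union of these frozen scopes, so the two coverage notions coincide.
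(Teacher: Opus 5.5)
Your proposal is correct and follows the paper's own argument. Both do a per-$v$ case split on whether any selected action's frozen scope meets $S_v$: the cover row forces $\iota_v=1$ when none does, nonnegativity of $c_v$ makes $\iota_v=0$ optimal otherwise, and summing over $v\in\mathcal I$ gives the retained-energy sum. Your separability remark (each $\iota_v$ appears only in its own cover constraint) makes explicit a step the paper uses without stating.
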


\begin{proof}
No selected action intersects $S_v$ exactly when
$S_v\cap\mathcal M(R,\mathbf m)=\varnothing$. The first constraint then forces
$\iota_v=1$. If an action does intersect $S_v$, setting $\iota_v=0$ is feasible
and minimizes its nonnegative coefficient $c_v$. Thus the optimal energy
contribution is
$c_v\mathbf 1[S_v\cap\mathcal M(R,\mathbf m)=\varnothing]$; summing over
$v\in\mathcal I$ proves the claim.
\end{proof}

\begin{proof}[Proof of Proposition 2]
Because all variables are binary,
$\upsilon_r=\upsilon_{r,\mathrm{o}}+\upsilon_{r,\mathrm{p}}$ selects at most one
admissible mode per root. The cardinality and alarm-coverage rows therefore map
every feasible $\upsilon$ to an assignment with $R\in\mathcal R_0$; conversely,
every admissible assignment in $\mathcal R_0$ defines such a vector.

For an evaluated root set $R_s$, its root-set exclusion constraint equals
$|R_s\setminus R|+|R\setminus R_s|=|R\triangle R_s|$. It is zero only for
$R=R_s$, so these constraints remove exactly
$R_0,\ldots,R_{t-1}$, independently of mode, and leave $\mathcal R_t$.
The lemma makes the objective for each remaining assignment exactly
$L(R,\mathbf m)$. Hence the MILP optimum is
$\min_{R\in\mathcal R_t}\min_{\mathbf m}L(R,\mathbf m)=B_t$, and its selected
root set is a valid $R_t$ in Eq.~(10) of the main paper.
\end{proof}

\subsection{Best-Bound Certificates}
\label{app:best-bound-proof}

\begin{proof}[Proof of Theorem 1]
For nonempty $\mathcal R_t$, Propositions~1--2 give
\[
\begin{aligned}
B_t
&=\min_{R\in\mathcal R_t}\min_{\mathbf m}L(R,\mathbf m)\\
&\leq\min_{R\in\mathcal R_t}\min_{\mathbf m}
\widehat U(R,\mathbf m)
=\min_{R\in\mathcal R_t}\widehat U_R.
\end{aligned}
\]
For $t\geq1$, after at least one completed root-set evaluation,
$\widehat U_t$ is finite and
\[
\begin{aligned}
\widehat U_t-\widehat U^\star
&=\left[\widehat U_t-
\min_{R\in\mathcal R_t}\widehat U_R\right]_+\\
&\leq[\widehat U_t-B_t]_+,
\end{aligned}
\]
which proves the anytime gap. At $t=0$, only
$B_0\leq\min_{R\in\mathcal R_0}\widehat U_R$ is asserted; the finite anytime
gap starts at $t=1$.
If $\mathcal R_t=\varnothing$, all root sets have been evaluated and optimality
is immediate under the convention $B_t=+\infty$.

If $B_t\geq\widehat U_t$, the gap is zero and
$\widehat U_t=\widehat U^\star$. Under the stopping rule
$B_t>\widehat U_t+\epsilon$, every unseen set satisfies
$\widehat U_R\geq B_t>\widehat U^\star+\epsilon$, so every
$\epsilon$-optimal root set has already been evaluated. Each completed
iteration removes one previously unseen root set only after all its admissible
modes have been evaluated. Therefore at most $|\mathcal R_0|$ completed
root-set evaluations are required.
\end{proof}

\begin{proof}[Proof of Corollary 1]
Certification gives $\widehat U_t=\widehat U^\star$. Every different evaluated
root set has objective at least $\widehat U_t^{\mathrm{alt}}$, while every
unseen root set has objective at least $B_t$. Hence
$\widehat U_R\geq\min\{\widehat U_t^{\mathrm{alt}},B_t\}$ for all
$R\ne\widehat R^\star$. Moreover,
$\widehat U_t^{\mathrm{alt}}\geq\widehat U_t$ and
$B_t\geq\widehat U_t$ at certification, so $\underline\Delta_t\geq0$.
Subtracting $\widehat U^\star=\widehat U_t$ gives the stated lower bound on
every alternative root set. If $\underline\Delta_t>\epsilon$, all alternatives
lie outside the $\epsilon$-optimal set, proving uniqueness.
\end{proof}

\subsection{Implementation and Certificate Scope}
\label{app:numerical-certificate}

The base MILP can be strengthened when calibrated energy terms, mutable scopes,
repair dependencies, and fixed proposal-noise draws share a strict component
partition $\mathcal P$. Let $\mathbf s_P(R,\mathbf m)$ record every candidate
root in $P$ as unselected, $\mathrm{o}$, or $\mathrm{p}$. The verified
decomposition has the form
\[
\begin{aligned}
\widehat U(R,\mathbf m)
&=c_0+\sum_{P\in\mathcal P}
\widehat U_P(\mathbf s_P(R,\mathbf m)),\\
L(R,\mathbf m)
&=c_0+\sum_{P\in\mathcal P}
L_P(\mathbf s_P(R,\mathbf m)),
\end{aligned}
\]
where $c_0$ collects immutable terms, with each root penalty assigned to one
component. Before iteration $t$, let $\mathcal H_{P,t}$ contain states
evaluated by the fixed inner solver restricted to $P$. Provided
$L_P(\mathbf s)\leq\widehat U_P(\mathbf s)$ for every cached state, define
\[
\begin{aligned}
L_t^+(R,\mathbf m)
=L(R,\mathbf m)
&+\sum_{P\in\mathcal P}\sum_{\mathbf s\in\mathcal H_{P,t}}
\mathbf 1[\mathbf s_P(R,\mathbf m)=\mathbf s]\\
&\qquad\cdot
\bigl(\widehat U_P(\mathbf s)-L_P(\mathbf s)\bigr).
\end{aligned}
\]
At most one cached state matches per component; a binary conjunction activates
that match, so the augmented MILP minimizes $L_t^+$ exactly.
The decomposition gives $L_t^+(R,\mathbf m)\leq\widehat U(R,\mathbf m)$, although
$L_t^+$ need not lower-bound the ideal $U$. Proposition~1 therefore concerns
the base $L$; Theorem~1 and Corollary~1 remain valid with
$B_t^+=\min_{R\in\mathcal R_t}\min_{\mathbf m}L_t^+(R,\mathbf m)$.

We solve the base or strengthened binary MILP with SciPy/HiGHS
\citep{huangfu2018parallelizing}.

All certificates concern the finite admissible root--mode space for the
provided graph, frozen checkpoints, and fixed inner-solver settings and
proposal-noise draws. They do not certify the global optimum of the continuous inner problem,
the correctness of the learned compatibility model, or causal completeness
outside $\mathcal X$.

\section{Datasets, Graphs, and Event Representation}
\label{app:dataset-details}

All five groups use directed causal graphs and time-aligned event trajectories;
Figure~\ref{fig:supp-dataset-form} shows three representative events per group.
Figures~\ref{fig:supp-causrca-graphs} and~\ref{fig:supp-synthetic-graphs}
expose the graph structure, while
Table~\ref{tab:supp-dataset-statistics} separates graph scale, data types, and
event counts. The causRCA graph is the provided industrial expert graph:
\emph{Probe}, \emph{Coolant}, and \emph{Hydraulics} are induced views with
11/15/17 nodes and 34/25/41 events, and \emph{Full} contains all 92 nodes over
the same 100 physical events. Thus the four causRCA settings are reporting
views, not four independent event collections. Figure~
\ref{fig:supp-causrca-graphs} separates the three subsystem views; the
redundant \emph{Full} rendering is omitted.

The four synthetic groups stress different aspects of RCA. TA tests delays in
binary command--response chains; TS tests mixed-type signals by combining
continuous measurements with categorical operating states; LM tests scalability
by repeating a binary module 50 times (206 nodes); and QT tests coupled nonlinear
dynamics using the quadruple-tank process \citep{johansson2000quadruple}. The
next subsections specify how clean trajectories are simulated, how
observation-only and physical-propagation faults are inserted, and when
downstream variables and alarms are recomputed.

\begin{table*}[t]
\centering
\small
\setlength{\tabcolsep}{2.1mm}
\begin{tabular}{@{}llrrlrr@{}}
\toprule
Group & Graph source & Nodes & Edges & Signal type mix & Normal runs & Test events \\
\midrule
causRCA & provided industrial expert graph & 92 & 104 & 78 Bool., 3 Int., 8 Cont., 3 Cat. & 170 & 100 (four views) \\
TA & generator graph & 10 & 21 & 10 Bool. & 48 & 30 \\
TS & generator graph & 10 & 12 & 5 Bool., 2 Cont., 3 Cat. & 48 & 30 \\
LM & generator graph & 206 & 400 & 206 Bool. & 32 & 50 \\
QT & physics-derived graph & 24 & 42 & 7 Bool., 8 Cont., 9 Cat. & 112 & 35 \\
\bottomrule
\end{tabular}
\caption{Dataset and graph statistics. Alarm and trusted-context nodes are
included in the node and signal-type counts.}
\label{tab:supp-dataset-statistics}
\end{table*}

\begin{figure*}[!t]
  \centering
  \includegraphics[width=\textwidth]{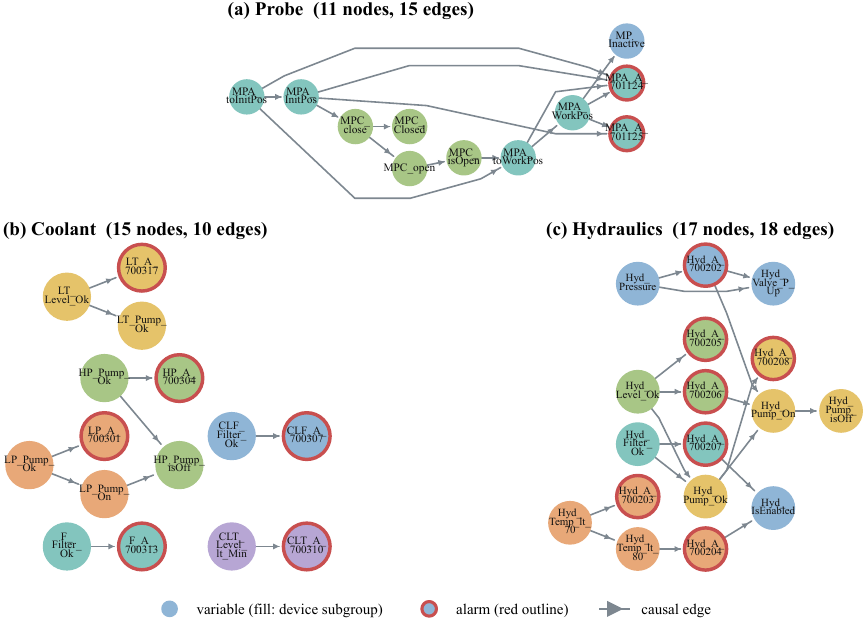}
  \caption{Provided causRCA subsystem graphs in a compact left-to-right
  directed layout. Fill colors visually group device prefixes; red outlines
  mark alarms.}
  \label{fig:supp-causrca-graphs}
\end{figure*}

\subsection{Fault Injection into Clean Trajectories}
\label{app:synthetic-generation}

Each fault event is paired with a clean trajectory generated under the same
commands, contexts, initial conditions, and random draws. The faulty copy
therefore differs only by the injected fault. An observation-only fault changes
only the recorded trajectory. A physical-propagation fault changes the
underlying state, after which descendants and alarms are recomputed in causal
order.

TA, TS, and LM use a 0.25\,s grid and 240\,s normal runs. Each event has a
55\,s simulation horizon; its fault starts on-grid at \(t_c\in[25,30]\) seconds
and lasts 8--10\,s. The saved record spans \(t=0\) through
\(t_a+10\,\mathrm{s}\), where \(t_a\) is the latest target-alarm onset. It
therefore retains 25--30\,s of clean pre-fault behavior and 10\,s after all
target alarms have appeared. QT uses the same output grid, 120\,s normal
episodes, and 180\,s event simulations with \(t_c=30\,\mathrm{s}\). Its saved
window is
\([t_c-15\,\mathrm{s},
\max\{t_a+10\,\mathrm{s},t_c+15\,\mathrm{s}\}]\), which contains at least
30\,s.

Algorithm~\ref{alg:supp-data-generation} summarizes the shared procedure. The
paired clean files and simulator truth are used only to audit generation and
analyze failures in Section~\ref{app:failure-cases}; inference receives only
the faulty event record. Generator-side injection types distinguish
observation-only from physical-propagation generation and are not
ground-truth labels for the auxiliary root-effect modes inferred by the method.

\begin{algorithm}[!t]
\small
\caption{Fault injection into paired clean trajectories}
\label{alg:supp-data-generation}
\begin{algorithmic}[1]
\Require graph \(G\), causal mechanisms, alarm rules, fault catalog,
grid \(\Delta t\)
\State Generate normal runs from sampled contexts and commands
\For{each fault event}
  \State sample clean \(\mathbf x^0\), onset \(t_c\), root set \(R\), and injection types
  \State \(\mathbf x\gets\mathbf x^0\)
  \For{\(r\in R\)}
    \If{\(r\)'s injection type is observation-only}
      \State modify only the recorded trajectory \(\mathbf x_r\)
    \Else
      \State modify \(r\)'s state; recompute its descendants
    \EndIf
  \EndFor
  \State recompute alarms; let \(t_a\) be the latest target-alarm onset
  \State crop the group-specific window
  \State verify \(\mathbf x^0\) is alarm-free and \(R\) reaches each target alarm
  \State verify those alarms disappear when every
  \(\mathbf x_r,\ r\in R\), is restored
  \State write \(\mathbf x\) in long form
\EndFor
\end{algorithmic}
\end{algorithm}

\subsection{Generation Equations by Synthetic Group}

\paragraph{TA: binary command--response delays.}
Let \(D_{[p,q]}(z)\) apply a uniform \([p,q]\)-step delay to each binary
transition:
\begin{equation}
\begin{aligned}
\mu_i^{\mathrm{mon}}&=D_{[0,1]}(u_i),&
x_1&=D_{[1,4]}(u_1),\\
g&=D_{[1,3]}(u_1x_1),&
x_2&=D_{[1,4]}(u_2x_1g).
\end{aligned}
\label{eq:supp-ta-generator}
\end{equation}
Here \(u_i\), \(x_i\), and \(g\) are the TA command, response, and guard
trajectories, respectively, with time arguments suppressed, while
\(\mu_i^{\mathrm{mon}}\) is a command monitor. Commands cycle through
idle--clamp--transfer--hold. For
\(r\in\{\mathrm{motion},\mathrm{guard},\mathrm{sequence}\}\), let \(v_r(t)\)
be its binary violation trace. Define the sustained-alarm operator
\begin{equation}
\begin{aligned}
S_{w,\theta}[v](t_j)
&:=\mathbf 1\!\left[
\frac1W\sum_{\ell=0}^{W-1}v(t_{j-\ell})\geq\theta\right],
\quad W=\frac{w}{\Delta t},\\
a_r(t_j)&:=S_{W_r\Delta t,\theta_r}[v_r](t_j).
\end{aligned}
\label{eq:supp-alarm-generator}
\end{equation}
with \((W_r\Delta t,\theta_r)=(4,0.65),(4,0.60),(3,0.55)\) in the listed order,
and alarms persist for at least 6\,s. The 30 events cover stuck, slow,
short-pulse, and joint observation-only/physical-propagation faults.

\paragraph{TS: operating regimes and sensor responses.}
The process regimes are idle, fill, drain, recirculate, and pressurize; commands lag
0.25--0.75\,s. The endpoints are
\begin{equation}
\begin{aligned}
\mu_P(t)&=0.50+0.04\,\mathrm{PumpOn}(t)-0.04\,\mathrm{ValveOut}(t),\\
\mu_L(t)&=0.50+0.16\,\mathrm{ValveIn}(t)-0.16\,\mathrm{ValveOut}(t).
\end{aligned}
\label{eq:supp-ts-endpoints}
\end{equation}
Raw pressure and level track \(\mu_P,\mu_L\) through a one-step-delayed 2\,s
exponential response with 0.04 and 0.035 overshoot. Thresholds 0.42 and 0.58
define low/normal/high states. Equation~\ref{eq:supp-alarm-generator} raises
PressureAlarm outside normal and LevelAlarm after a 3\,s Mode--LevelState
mismatch. Events cover physical-propagation sources, observation-only faults,
and three boundary cases.

\paragraph{LM: repeated command--response modules.}
For module \(k\), \(u_k\), \(x_k\), and \(g_k\) denote its command, response,
and guard trajectories, with time arguments suppressed. For ten-module group
\(\mathcal G_j\),
\begin{equation}
\begin{aligned}
x_k&=D_{\delta(q_{\mathrm{LM}})}(u_k),&
g_k&=D_{[1,3]}(x_k),\\
a_k&=S_{3\,\mathrm{s},0.65}
\bigl[|u_k-x_k|\lor|x_k-g_k|\bigr],\\
a_j^{\mathrm{grp}}&=S_{1\,\mathrm{s},0.25}
\left[\max_{k\in\mathcal G_j}a_k\right].
\end{aligned}
\label{eq:supp-lm-generator}
\end{equation}
where \(q_{\mathrm{LM}}\) is the trusted shared utility context shown as node
\(q\) in Figure~\ref{fig:supp-synthetic-graphs}, and
\(\delta(q_{\mathrm{LM}})=[1,3]\) when \(q_{\mathrm{LM}}=1\) and \([3,6]\)
otherwise. The operator \(S_{w,\theta}\) is defined in
Eq.~\eqref{eq:supp-alarm-generator}. The 10/25/15
one-/two-/three-root events use distinct groups and activate both alarm levels.

\paragraph{QT: quadruple-tank dynamics and alarms.}
For tank levels \(h_i\), fixed-step RK4 integrates
\begin{equation}
\begin{aligned}
\dot h_1&=-\frac{a_1}{A_1}\sqrt{2gh_1}
 +\frac{a_3}{A_1}\sqrt{2gh_3}
 +\frac{\gamma_1k_1v_1}{A_1},\\
\dot h_2&=-\frac{a_2}{A_2}\sqrt{2gh_2}
 +\frac{a_4}{A_2}\sqrt{2gh_4}
 +\frac{\gamma_2k_2v_2}{A_2},\\
\dot h_3&=-\frac{a_3}{A_3}\sqrt{2gh_3}
 +\frac{(1-\gamma_2)k_2v_2}{A_3},\\
\dot h_4&=-\frac{a_4}{A_4}\sqrt{2gh_4}
 +\frac{(1-\gamma_1)k_1v_1}{A_4},
\end{aligned}
\label{eq:supp-qt-generator}
\end{equation}
\[
\begin{aligned}
A&=(28,32,28,32)\,\mathrm{cm}^2,\\
a&=(0.071,0.057,0.071,0.057)\,\mathrm{cm}^2,\\
k&=(3.33,3.35)\,\mathrm{cm}^3/(\mathrm{V\,s}),\\
g&=981\,\mathrm{cm}/\mathrm{s}^2.
\end{aligned}
\]
Here \(v_i,\gamma_i\) are pump and split states. Internal/output steps are
0.05/0.25\,s; pump/split time constants are 0.35/0.50\,s. Branch alarms require
a 2\,s mismatch above 0.05\,V or 0.03 split ratio; tank alarms require a 4\,s
deviation above 0.12\,cm outside transition grace. Thirty events follow trained
transition classes; five are out of support.

\begin{figure*}[!t]
  \centering
  \includegraphics[width=0.99\textwidth]{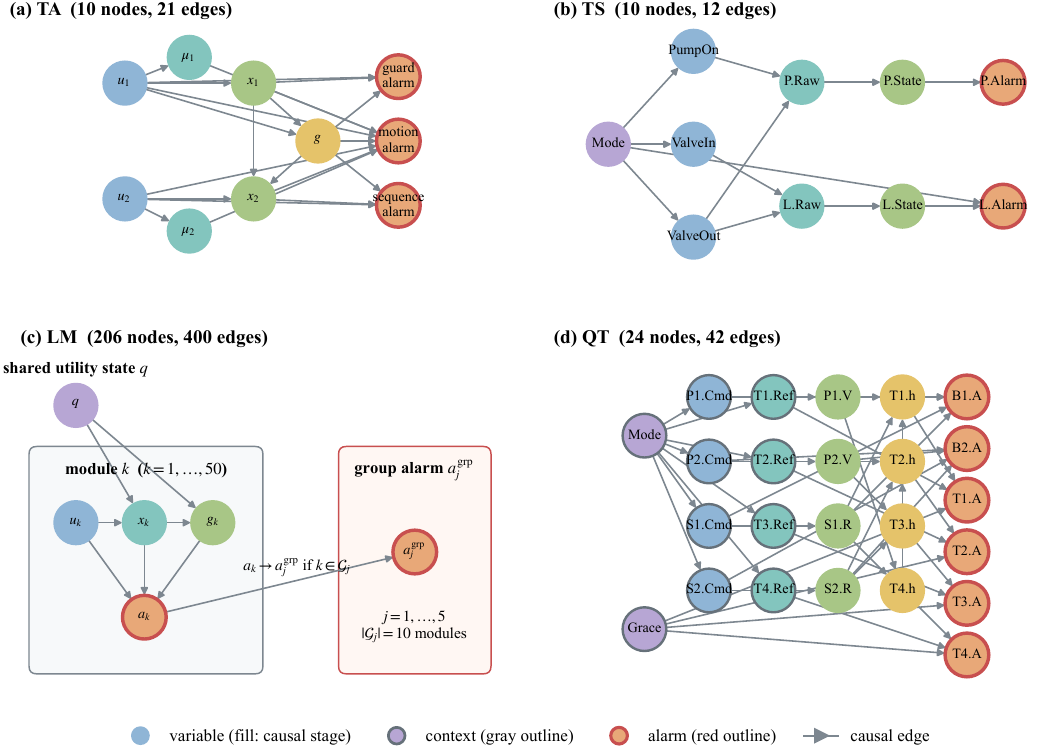}
  \caption{Synthetic causal structures. TA, TS, and QT show the full graphs;
  LM contracts its repeated modules, with counts referring to the full graph.
  In TA, \(\mu_i\) denotes a command monitor. In TS, P/L denote
  pressure/level, followed by Raw, State, or Alarm. In QT, P/S/T/B denote
  pump, split, tank, and branch; Cmd, Ref, V, R, \(h\), A, Mode, and Grace
  denote command, reference, voltage, ratio, level, alarm, operating mode, and
  transition grace.}
  \label{fig:supp-synthetic-graphs}
\end{figure*}

\section{Baseline Adapters and Evaluation Protocol}
\label{app:temporal-baselines}

\subsection{Method Adapters}

The dagger denotes an interface adaptation required by the event data; each
method's original inference rule is retained.
\begin{itemize}
\setlength{\itemsep}{2pt}
\item \textbf{IDI$^\dagger$.} IDI fits an SCM for a varying
service-level objective/key performance indicator (SLO/KPI) and scores
interventions by whether they restore that target~\citep{nagalapatti2025robust}.
Our alarm target is always 0 in normal runs, yielding a constant model that
makes nearly any intervention appear successful. IDI$^\dagger$ instead tests
whether the intervened alarm-parent tuple returns to normal support, while
retaining IDI's joint-intervention and Shapley attribution.

\item \textbf{AERCA$^\dagger$.} A fixed mask sets lagged coefficients outside
the provided causal graph to zero in AERCA's prediction, sparsity, and
smoothness terms; its learning objective and exogenous-deviation ranking are
otherwise unchanged~\citep{han2025root}.

\item \textbf{EasyRCA$^\dagger$.} EasyRCA requires anomalous nodes and their
onsets~\citep{assaad2023root}. A normal-only frontend scores continuous
level/difference tails and discrete state/transition surprise, calibrating each
variable to a 0.05 null activation rate on 2,048 held-out normal blocks. Because
EasyRCA relies on onset order, two consecutive activations define an onset so
that isolated threshold crossings do not create unstable ordering. EasyRCA
then uses the provided graph with temporal self-loops and run-contained
direct-effect regressions; ill-posed regressions abstain.

\item \textbf{T-RCA$^\dagger$.} The same frontend supplies thresholded event
trajectories to T-RCA's provided-graph branch~\citep{zan2024fly}. T-RCA reasons
over thresholded observations at each time point, so requiring persistence
would alter its input semantics; we therefore retain pointwise activation.
Deterministic ordering is added only when ranking metrics require an order.
\end{itemize}

The adapters use no root labels, operating-regime labels, or root cardinality. Event
windows contain only observations available by inference time.

\subsection{Baseline Settings}

Only non-default evaluation choices are listed.
\begin{itemize}
\setlength{\itemsep}{2pt}
\item \textbf{Candidate scope.} Every method is restricted to
\(\mathcal C\), the union of ancestors of the top-level active alarms
\(\mathcal A^\top\), as defined in the main paper.

\item \textbf{Native root set.} \(R_M^{\mathrm{nat}}\) denotes the root set
produced by method \(M\) before alarm-coverage completion. It is the Holm
rejection set at \(\alpha=0.01\) for ranking methods and the returned set for
set-valued methods.

\item \textbf{Fixed ranking.}
\(\pi_M=(r_1,\ldots,r_{|\mathcal C|})\) is method \(M\)'s fixed candidate
order. For A@1 and C@3, EasyRCA$^\dagger$ and T-RCA$^\dagger$ place native
roots before other anomalous ancestors and remaining candidates.

\item \textbf{Alarm coverage.} If a native set does not cover every top-level
active alarm, Algorithm~\ref{alg:supp-alarm-coverage} adds eligible ancestors
without changing the method ranking.
\end{itemize}

For each \(A_j\in\mathcal A^\top\), let
\(\mathcal C_j=\operatorname{an}(A_j)\) be its eligible ancestors. The
\emph{alarm-covered root set} \(R_M^{\mathrm{cov}}\) adds the smallest subset
of ranked candidates needed to intersect every \(\mathcal C_j\).

\begin{table}[!b]
\centering
\fontsize{9}{10.5}\selectfont
\setlength{\tabcolsep}{3.2mm}
\renewcommand{\arraystretch}{0.88}
\begin{tabular}{@{}lrr@{}}
\toprule
Method & Native ES & Alarm-covered ES \\
\midrule
MATERO-RCA      & \textbf{94.3} & \textbf{94.3} \\
StableRCA       & 48.5 & \textbf{70.9} \\
RCG             & 54.3 & \textbf{81.0} \\
CIRCA           & 26.4 & \textbf{30.5} \\
SmoothTraversal & 20.3 & \textbf{32.5} \\
IDI$^\dagger$   & 28.6 & \textbf{44.2} \\
AERCA           & 14.5 & \textbf{29.1} \\
AERCA$^\dagger$ & 13.4 & \textbf{29.2} \\
BARO            & 18.4 & \textbf{41.0} \\
EasyRCA$^\dagger$ & 22.1 & \textbf{33.6} \\
T-RCA$^\dagger$   & 35.2 & \textbf{44.3} \\
\bottomrule
\end{tabular}
\caption{Five-group ES (\%) before and after the same label-free alarm-coverage
completion. It improves uncovered baselines and applies MATERO-RCA's coverage
requirement uniformly for a fair comparison. Bold marks row-wise gains; ties
are retained.}
\label{tab:supp-shared-coverage}
\end{table}

\begin{algorithm}[H]
\footnotesize
\caption{Alarm-coverage completion}
\label{alg:supp-alarm-coverage}
\begin{algorithmic}[1]
\Require \(R_M^{\mathrm{nat}}\), ordered candidates
  \(\pi_M=(r_1,\ldots,r_{|\mathcal C|})\), and
  \(\{\mathcal C_j\}_{A_j\in\mathcal A^\top}\)
\Ensure Alarm-covered root set \(R_M^{\mathrm{cov}}\)
\State \(\mathcal U\gets
  \{j:R_M^{\mathrm{nat}}\cap\mathcal C_j=\varnothing\}\)
\If{\(\mathcal U=\varnothing\)}
  \State \Return \(R_M^{\mathrm{nat}}\)
\EndIf
\For{\(i=1,\ldots,|\mathcal C|\)}
  \State \(\mathcal V_i\gets\{j\in\mathcal U:r_i\in\mathcal C_j\}\)
\EndFor
\State \(\Delta_M\gets\) exact bit-mask minimum cover of
  \(\mathcal U\) by \(\{\mathcal V_i:r_i\notin R_M^{\mathrm{nat}}\}\)
\State Break ties by the summed and then ordered indices in \(\pi_M\)
\State \Return \(R_M^{\mathrm{cov}}\gets
  R_M^{\mathrm{nat}}\cup\Delta_M\)
\end{algorithmic}
\end{algorithm}

\(\Delta_M=R_M^{\mathrm{cov}}\setminus R_M^{\mathrm{nat}}\) contains only
label-free coverage additions; Holm's guarantee applies to
\(R_M^{\mathrm{nat}}\), not \(\Delta_M\).

\section{Additional Results}
\label{app:additional-results}

This section reports ranking diagnostics, causRCA setting breakdowns,
sensitivity analyses, and failure cases.

\subsection{causRCA Setting-Level Results}
\label{app:causrca-results}

Table~\ref{tab:supp-causrca-settings} disaggregates the causRCA macro over its
four settings and primary metrics. MATERO-RCA recovers all four settings
exactly, while competing methods degrade most on Probe and Full; the macro is
therefore not driven by the easier Coolant and Hydraulics views.

\subsection{Ranking and Selection Diagnostics}
\label{app:additional-metrics}

Table~\ref{tab:supp-extended-metrics} adds three higher-is-better ranking
metrics for each dataset group.

\noindent\textbf{MRR} measures the first annotated root; \textbf{MAP@3} and
\textbf{NDCG@3} measure multi-root quality in the top three, with linear and
logarithmic rank weighting, respectively.

\begin{figure*}[!t]
\begin{minipage}{\textwidth}
\begin{center}
\includegraphics[width=0.955\textwidth]{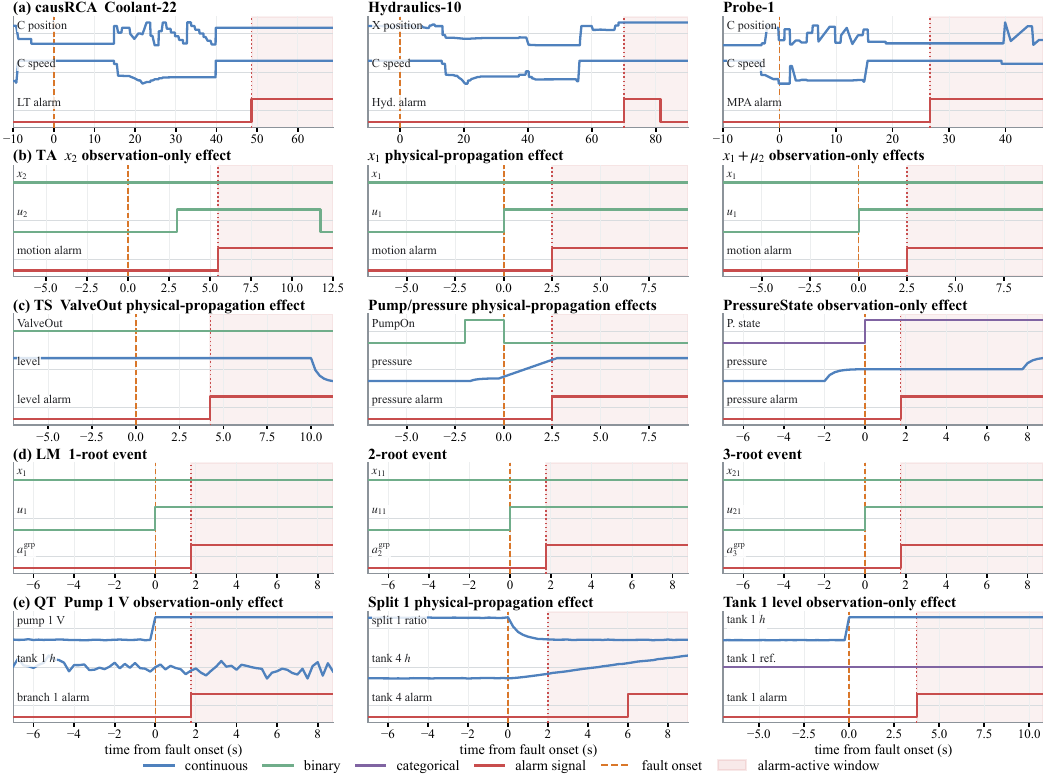}
\captionof{figure}{Representative event trajectories (three per dataset group).
Rows (a)--(e) show causRCA, TA, TS, LM, and QT; the causRCA examples span
Coolant, Hydraulics, and Probe. Each panel separates a root, response or
driver, and alarm trajectory. Orange and red dotted lines mark fault and
first-alarm onset, respectively, and red shading marks the post-alarm interval.
Tracks are independently min--max normalized and vertically offset, so
amplitudes are not comparable.}
\label{fig:supp-dataset-form}

{\small
\setlength{\tabcolsep}{0.55mm}
\renewcommand{\arraystretch}{0.96}
\resizebox{\textwidth}{!}{%
\begin{tabular}{@{}l*{6}{ccc}@{}}
\toprule
Method
& \multicolumn{3}{c}{causRCA}
& \multicolumn{3}{c}{TA}
& \multicolumn{3}{c}{TS}
& \multicolumn{3}{c}{LM}
& \multicolumn{3}{c}{QT}
& \multicolumn{3}{c}{Mean} \\
\cmidrule(lr){2-4}\cmidrule(lr){5-7}\cmidrule(lr){8-10}
\cmidrule(lr){11-13}\cmidrule(lr){14-16}\cmidrule(lr){17-19}
& MRR & M@3 & NDCG
& MRR & M@3 & NDCG
& MRR & M@3 & NDCG
& MRR & M@3 & NDCG
& MRR & M@3 & NDCG
& MRR & M@3 & NDCG \\
\midrule
EasyRCA$^\dagger$
& 91.1 & 84.2 & 85.1
& 22.0 & 4.4 & 6.7
& 75.3 & 70.6 & 74.0
& 24.5 & 9.4 & 11.7
& 73.2 & 66.0 & 71.5
& 57.2 & 46.9 & 49.8 \\
T-RCA$^\dagger$
& 91.1 & 84.2 & 85.1
& 22.0 & 4.4 & 6.7
& 73.1 & 68.3 & 72.4
& 24.5 & 9.4 & 11.7
& 97.7 & 95.7 & 96.0
& 61.7 & 52.4 & 54.4 \\
AERCA
& 78.8 & 74.0 & 76.7
& 27.4 & 15.6 & 18.3
& 67.2 & 57.2 & 59.6
& 6.3 & 0.0 & 0.0
& 65.3 & 58.6 & 64.0
& 49.0 & 41.1 & 43.7 \\
AERCA$^\dagger$
& 81.7 & 75.5 & 77.6
& 24.4 & 12.2 & 15.9
& 58.9 & 47.2 & 50.5
& 8.9 & 2.0 & 2.0
& 72.1 & 65.2 & 69.1
& 49.2 & 40.4 & 43.0 \\
CIRCA
& 88.7 & 88.9 & 91.8
& 61.2 & 51.7 & 57.6
& 69.7 & 65.8 & 70.4
& 51.4 & 31.7 & 37.1
& \underline{97.9} & \underline{97.1} & \underline{97.1}
& 73.8 & 67.1 & 70.8 \\
RCG
& \underline{96.7} & \underline{95.6} & \underline{97.0}
& \underline{89.4} & \underline{89.4} & \underline{92.1}
& \underline{96.7} & \underline{96.1} & \underline{97.3}
& \underline{98.0} & \underline{94.2} & \underline{96.1}
& 96.2 & 91.4 & 92.8
& \underline{95.4} & \underline{93.4} & \underline{95.1} \\
StableRCA
& 94.7 & 91.8 & 92.9
& 56.7 & 51.1 & 60.1
& \textbf{98.3} & \textbf{98.3} & \textbf{98.8}
& \textbf{100.0} & 92.3 & 94.9
& 93.3 & 91.9 & 94.3
& 88.6 & 85.1 & 88.2 \\
SmoothTraversal
& 80.0 & 75.9 & 79.2
& 33.0 & 24.4 & 33.6
& 75.6 & 70.0 & 72.7
& 51.9 & 33.9 & 39.0
& 54.8 & 48.1 & 55.3
& 59.0 & 50.5 & 56.0 \\
BARO
& 89.5 & 87.8 & 88.0
& 17.8 & 4.4 & 6.7
& 61.1 & 56.4 & 62.4
& 4.1 & 0.0 & 0.0
& 81.9 & 79.5 & 84.6
& 50.9 & 45.6 & 48.3 \\
IDI$^\dagger$
& 87.3 & 83.1 & 86.1
& 79.1 & 77.8 & 81.7
& 53.9 & 46.7 & 53.6
& 97.1 & 92.8 & 94.3
& 88.1 & 82.9 & 85.6
& 81.1 & 76.6 & 80.3 \\
\midrule
MATERO-RCA
& \textbf{100.0} & \textbf{100.0} & \textbf{100.0}
& \textbf{98.3} & \textbf{98.3} & \textbf{98.8}
& 91.7 & 91.9 & 94.1
& \textbf{100.0} & \textbf{100.0} & \textbf{100.0}
& \textbf{100.0} & \textbf{100.0} & \textbf{100.0}
& \textbf{98.0} & \textbf{98.1} & \textbf{98.6} \\
\bottomrule
\end{tabular}
}
}
\captionof{table}{Additional ranking results (\%). MRR measures how early the
first annotated root appears. M@3 denotes MAP@3, and NDCG is also evaluated at
3; both measure multi-root ranking quality. causRCA is the equal-weight macro over its four settings,
and Mean is the equal-weight macro over the five dataset groups. Best results
are bold and second-best results are underlined.}
\label{tab:supp-extended-metrics}
\end{center}
\end{minipage}
\end{figure*}

\begin{table*}[!t]
\begin{minipage}{\textwidth}
\begin{center}
{\small
\setlength{\tabcolsep}{0.55mm}
\renewcommand{\arraystretch}{0.96}
\resizebox{0.93\textwidth}{!}{%
\begin{tabular}{@{}l*{4}{rrrr}@{}}
\toprule
Method
& \multicolumn{4}{c}{Probe ($n=34$)}
& \multicolumn{4}{c}{Coolant ($n=25$)}
& \multicolumn{4}{c}{Hydraulics ($n=41$)}
& \multicolumn{4}{c}{Full ($n=100$)} \\
\cmidrule(lr){2-5}\cmidrule(lr){6-9}
\cmidrule(lr){10-13}\cmidrule(lr){14-17}
& \multicolumn{1}{c}{A@1} & \multicolumn{1}{c}{C@3}
& \multicolumn{1}{c}{Set F1} & \multicolumn{1}{c}{ES}
& \multicolumn{1}{c}{A@1} & \multicolumn{1}{c}{C@3}
& \multicolumn{1}{c}{Set F1} & \multicolumn{1}{c}{ES}
& \multicolumn{1}{c}{A@1} & \multicolumn{1}{c}{C@3}
& \multicolumn{1}{c}{Set F1} & \multicolumn{1}{c}{ES}
& \multicolumn{1}{c}{A@1} & \multicolumn{1}{c}{C@3}
& \multicolumn{1}{c}{Set F1} & \multicolumn{1}{c}{ES} \\
\midrule
StableRCA
& 79.4 & \underline{82.4} & 72.5 & \underline{58.8}
& \textbf{100.0} & \textbf{100.0} & \textbf{100.0} & \textbf{100.0}
& \textbf{100.0} & \textbf{100.0} & \textbf{100.0} & \textbf{100.0}
& 90.0 & 87.0 & 89.0 & \underline{85.0} \\
RCG
& \underline{82.4} & \textbf{100.0} & \underline{79.6} & 41.2
& \textbf{100.0} & \textbf{100.0} & \textbf{100.0} & \textbf{100.0}
& \textbf{100.0} & \textbf{100.0} & \underline{87.0} & 56.1
& \underline{92.0} & \textbf{100.0} & 88.2 & 62.0 \\
CIRCA
& 35.3 & \textbf{100.0} & 68.5 & 8.8
& \textbf{100.0} & \textbf{100.0} & \textbf{100.0} & \textbf{100.0}
& \textbf{100.0} & \textbf{100.0} & \textbf{100.0} & \textbf{100.0}
& 78.0 & \textbf{100.0} & \underline{89.3} & 69.0 \\
SmoothTraversal
& 2.9 & 55.9 & 2.9 & 2.9
& \textbf{100.0} & \textbf{100.0} & \textbf{100.0} & \textbf{100.0}
& \textbf{100.0} & \textbf{100.0} & \underline{87.0} & 56.1
& 67.0 & 85.0 & 61.7 & 49.0 \\
IDI$^\dagger$
& 64.7 & 70.6 & 61.2 & 29.4
& \textbf{100.0} & \textbf{100.0} & \textbf{100.0} & \textbf{100.0}
& 78.0 & \textbf{100.0} & 78.9 & 46.3
& 79.0 & \underline{89.0} & 77.2 & 51.0 \\
AERCA
& 23.5 & 32.4 & 27.5 & 8.8
& \textbf{100.0} & \textbf{100.0} & \textbf{100.0} & \textbf{100.0}
& 70.7 & \textbf{100.0} & 84.4 & 61.0
& 79.0 & 85.0 & 73.9 & 54.0 \\
AERCA$^\dagger$
& 23.5 & 32.4 & 19.1 & 11.8
& \textbf{100.0} & \textbf{100.0} & \textbf{100.0} & \textbf{100.0}
& \underline{90.2} & \textbf{100.0} & 86.3 & \underline{70.7}
& 84.0 & 79.0 & 75.5 & 67.0 \\
BARO
& 61.8 & 64.7 & 56.9 & 47.1
& \textbf{100.0} & \textbf{100.0} & \textbf{100.0} & \textbf{100.0}
& \textbf{100.0} & \textbf{100.0} & \textbf{100.0} & \textbf{100.0}
& 87.0 & \underline{89.0} & 85.0 & 81.0 \\
EasyRCA$^\dagger$
& 64.7 & 41.2 & 56.9 & 41.2
& \textbf{100.0} & \textbf{100.0} & \textbf{100.0} & \textbf{100.0}
& \textbf{100.0} & \textbf{100.0} & \underline{87.0} & 56.1
& 88.0 & 80.0 & 80.0 & 62.0 \\
T-RCA$^\dagger$
& 64.7 & 41.2 & 56.9 & 41.2
& \textbf{100.0} & \textbf{100.0} & \textbf{100.0} & \textbf{100.0}
& \textbf{100.0} & \textbf{100.0} & \underline{87.0} & 56.1
& 88.0 & 80.0 & 80.0 & 62.0 \\
\midrule
MATERO-RCA
& \textbf{100.0} & \textbf{100.0} & \textbf{100.0} & \textbf{100.0}
& \textbf{100.0} & \textbf{100.0} & \textbf{100.0} & \textbf{100.0}
& \textbf{100.0} & \textbf{100.0} & \textbf{100.0} & \textbf{100.0}
& \textbf{100.0} & \textbf{100.0} & \textbf{100.0} & \textbf{100.0} \\
\bottomrule
\end{tabular}
}
}
\captionof{table}{causRCA setting-level results under the shared evaluation protocol
(\%). Probe, Coolant, and Hydraulics are subsystem views; Full evaluates the
combined 92-node graph over the same 100 physical events. Best results are bold
and second-best results are underlined; ties are retained.}
\label{tab:supp-causrca-settings}
\end{center}
\end{minipage}
\end{table*}

Table~\ref{tab:supp-holm-alpha} varies Holm's \(\alpha\) while keeping candidate
\(p\)-values and rankings fixed. The main setting, \(\alpha=0.01\), is not
tuned per method or dataset.

\begin{table}[!t]
\centering
\fontsize{8}{9.5}\selectfont
\setlength{\tabcolsep}{1.0mm}
\begin{tabular}{@{}lcccc@{}}
\toprule
& \multicolumn{3}{c}{Holm} & BH \\
\cmidrule(lr){2-4}\cmidrule(l){5-5}
Method & \(\alpha=0.01\) & \(\alpha=0.05\) & \(\alpha=0.10\) & \(q_{\rm BH}=0.05\) \\
\midrule
StableRCA       & 48.5 / \textbf{70.9} & 55.2 / 59.3 & \textbf{55.3} / 57.5 & 54.7 / 57.5 \\
RCG             & 54.3 / \textbf{81.0} & \textbf{63.9} / 69.2 & 60.6 / 65.3 & 62.3 / 67.1 \\
CIRCA           & 26.4 / \textbf{30.5} & \textbf{27.4} / 29.9 & \textbf{27.4} / 29.9 & 16.2 / 16.2 \\
SmoothTraversal & 20.3 / \textbf{32.5} & \textbf{21.8} / \textbf{32.5} & \textbf{21.8} / 31.9 & \textbf{21.8} / \textbf{32.5} \\
IDI$^\dagger$   & 28.6 / \textbf{44.2} & \textbf{29.2} / 30.3 & 27.5 / 28.6 & 23.7 / 27.1 \\
AERCA           & \textbf{14.5} / \textbf{29.1} & 13.9 / 28.4 & 12.6 / 27.6 & 12.6 / 27.0 \\
AERCA$^\dagger$ & \textbf{13.4} / \textbf{29.2} & 12.3 / 28.0 & 11.2 / 26.9 & 10.5 / 26.3 \\
BARO            & 18.4 / 41.0 & \textbf{21.3} / \textbf{41.5} & \textbf{21.3} / 40.9 & \textbf{21.3} / \textbf{41.5} \\
\midrule
Mean            & 28.1 / \textbf{44.8} & \textbf{30.6} / 39.9 & 29.7 / 38.6 & 27.9 / 36.9 \\
\bottomrule
\end{tabular}
\caption{Holm-\(\alpha\) and BH sensitivity for the eight ranking baselines
(five-group ES, \%). Each cell reports native / alarm-covered ES; the BH column
uses \(q_{\rm BH}=0.05\), and Holm \(\alpha=0.01\) is the main setting. Bold marks the
best value of each metric within a row; ties are retained.}
\label{tab:supp-holm-alpha}
\end{table}

Holm \(\alpha=0.01\) gives the highest mean alarm-covered ES (44.8\%) and is
best or tied for seven of eight baselines. Although \(\alpha=0.05\) improves
native ES, it is weaker after the shared coverage step; BH is weaker on both
means. Thus \(\alpha=0.01\) is the strongest common covered-set setting, not a
method-specific choice.

\subsection{Calibration Sensitivity}
\label{app:calibration-sensitivity}

Let \(\rho_C\) and \(\rho_A\) denote the empirical quantile levels used to set
\(\tau_i^C\) and \(\tau_j^A\) in Eq.~(5) of the main paper. We vary one at a
time with fixed checkpoints, event windows, and solver settings
(\(K=3,\gamma=0.25\)). Table~
\ref{tab:supp-calibration-sensitivity} reports the five-group macro.

\begin{table}[!t]
\centering
\small
\setlength{\tabcolsep}{2.7pt}
\begin{tabular}{@{}lcccccc@{}}
\toprule
Setting & \(\rho_C\) & \(\rho_A\) & A@1 & C@3 & Set F1 & ES \\
\midrule
Default & 0.99 & 0.95 & 96.0 & \textbf{100.0} & 97.8 & 94.3 \\
Compatibility & 0.95 & 0.95 & 94.7 & \textbf{100.0} & 97.5 & 93.0 \\
Alarm (lower) & 0.99 & 0.92 & \textbf{96.7} & \textbf{100.0} &
\textbf{98.5} & \textbf{95.0} \\
Alarm (higher) & 0.99 & 0.97 & 94.6 & 98.7 & 96.0 & 91.8 \\
\bottomrule
\end{tabular}
\caption{Frozen-checkpoint calibration sensitivity (\%). Bold marks the best
value in each metric column; ties are retained.}
\label{tab:supp-calibration-sensitivity}
\end{table}

Compatibility calibration is stable: changing \(\rho_C\) from 0.99 to 0.95
changes Set F1 by \(-0.3\) points. Alarm calibration is less invariant.
\(\rho_A=0.92\) changes only TA event 30, whereas \(\rho_A=0.97\) loses 1.8 Set-F1
points; at 0.99, 36/345 events have no alarm
energy. The prespecified \(\rho_A=0.95\) is therefore a balanced common setting,
not a test-set optimum.

\subsection{Discretization Sensitivity}
\label{app:discretization-sensitivity}

We vary the response-bin count \(B\in\{5,6,7,8\}\) for QT. Each continuous
response uses \(B-2\) equal-width intervals over its normal-training range plus
two outer bins; command and reference signals retain their native finite states.
Each variant is trained for its binning scheme. We set the compatibility
quantile to \(\rho_C=0.97\) for all variants while keeping the architecture, alarm
calibration, events, and solver settings fixed.

\begin{table}[!t]
\centering
\small
\setlength{\tabcolsep}{2.7pt}
\begin{tabular}{@{}cccccc@{}}
\toprule
Bins & A@1 & C@3 & Set F1 & ES & Evals. \\
\midrule
5 & 94.3 & 94.3 & 91.9 & 88.6 & 2.86 \\
6 & 94.3 & 94.3 & 91.8 & 85.7 & 3.20 \\
7 & \textbf{100.0} & \textbf{100.0} & \textbf{97.5} & \textbf{91.4} & \textbf{2.57} \\
8 & 94.3 & 94.3 & 91.8 & 85.7 & 2.86 \\
\bottomrule
\end{tabular}
\caption{QT response discretization sensitivity (\%). A@1 and C@3 denote
AnyRoot@1 and CompleteRoots@3; Evals.\ denotes mean root--mode evaluations per
event.}
\label{tab:supp-bin-sensitivity}
\end{table}

Seven bins is the only setting with perfect A@1 and C@3 and gives the highest
Set F1 (97.5\%) and ES (91.4\%). The other bin counts introduce only two
localization errors (events 16 and 20), both due to boundary placement. A
fault-induced change from \(x^{\mathrm{clean}}\) to \(x^{\mathrm{fault}}\) is
hidden when both values receive
the same discrete state,
\mbox{\(\mathsf Q_i^{(B)}(x^{\mathrm{clean}})
=\mathsf Q_i^{(B)}(x^{\mathrm{fault}})\)}. For example, in event 16,
\textit{Tank1Level} changes from approximately 12.04 to 13.05; the 5-, 6-, and 8-bin
maps place both values in the same bin, whereas the 7-bin map separates them.
Our equal-width bins use no domain knowledge. In practice, alarm limits,
deadbands, and operating regimes provide engineering priors for placing
boundaries \citep{weng2022evidence}, reducing such fault-relevant aliasing.

\subsection{Causal-Graph Sensitivity}
\label{app:graph-sensitivity}

The default uses the unperturbed QT graph. At each perturbation level, we
evaluate 10 fixed graph perturbations that remove three of 22
mechanism edges or add three/five acyclic edges, giving nominal \(-10\%\),
\(+10\%\), and \(+20\%\) perturbations.
Alarm edges and declared root-to-alarm reachability remain fixed, and the
\(+10\%\) additions are nested within \(+20\%\). We retrain and recalibrate
each graph with model seed 1 fixed, \(\rho_C=0.97\), seven bins, and \(K=3\).

The default yields \(100.0\%\) A@1/C@3, \(97.5\%\) Set F1, and \(91.4\%\) ES.
Across the 10 seeds, deletion lowers mean A@1 to \(96.6\%\), Set F1 to
\(94.1\%\), and ES to \(88.0\%\), while C@3 remains perfect. Its 12 additional
errors all replace the true
\textit{Tank1Level} root with the child of a removed edge. This reflects
omitted-edge misspecification: deleting a true relation removes both a
conditioning dependency and its physical propagation path, shifting the
explanation from the upstream cause to the affected child.

Additions preserve A@1/C@3 but mainly induce conservative over-selection.
An incorrect edge makes CompatNet learn a normal association between a child
and a noncausal parent. During a fault, the child may change while this added
parent does not, making a valid causal response appear unlikely and receive
excessive local energy. Conversely, an accidentally matching parent can mask
a true incompatibility. These errors can respectively block valid downstream
propagation or suppress the propagating root-effect mode \(\mathrm p\). All 27
new supersets follow this mechanism, with their extra roots drawn from
descendants lost from the default propagation closure. At \(+20\%\), mean Set F1 and ES are \(96.2\%\) and
\(86.9\%\). Overall, the effect is modest: mean Set F1 and ES remain at least
\(94.1\%\) and \(86.9\%\), respectively, across all perturbation levels.
Thus the sensitivity originates in local compatibility modeling
rather than search or ranking.

\begin{figure}[t]
  \centering
  \includegraphics[width=\columnwidth]{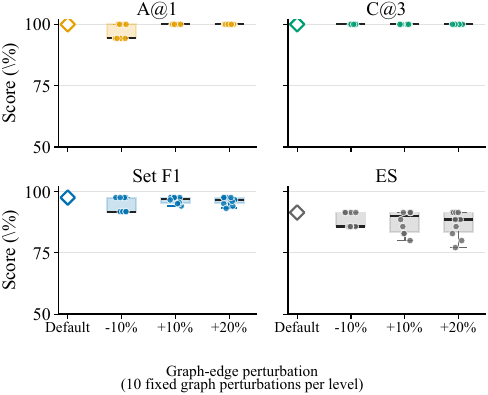}
  \caption{QT causal-graph sensitivity on 35 events. Each box and its points
  summarize \(n=10\) fixed graph perturbations at that level. The diamond
  is the separately evaluated default graph and is excluded from the boxes;
  model seed 1 is fixed throughout.}
  \label{fig:supp-graph-sensitivity}
\end{figure}

\FloatBarrier

\begin{figure}[!t]
  \centering
  \includegraphics[width=\columnwidth]{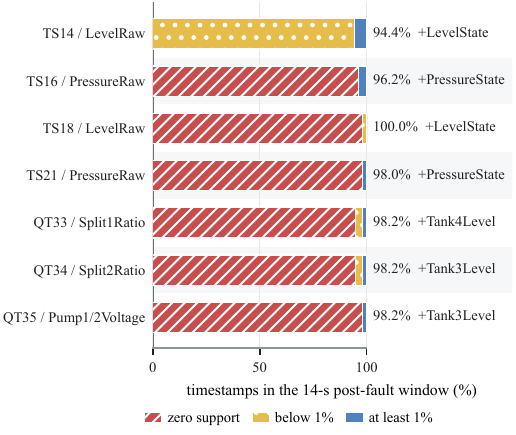}
  \caption{Normal-support composition over the 14-s post-fault window for seven
  OOD failures. For multi-root events, each timestamp uses the minimum
  conditional frequency across annotated root relations. Red, yellow, and blue
  indicate zero, below-\(1\%\), and at-least-\(1\%\) support; labels report the
  low-support share and added downstream root.}
  \label{fig:supp-ood-failures}
\end{figure}

\subsection{Failure Analysis}
\label{app:failure-cases}

Of the nine ES errors, seven are predefined OOD physical-propagation faults. We define
these as events whose annotated root--parent configurations have negligible
support in normal training, placing their downstream propagation outside the
mechanisms learned from normal data.

\noindent\textbf{OOD physical-propagation faults.}
Figure~\ref{fig:supp-ood-failures} measures the normal-training support of each
annotated root conditioned on its causal parents. All seven failures are
dominated by unseen or below-\(1\%\) contexts, confirming that their physical
propagation requires extrapolation beyond learned normal mechanisms.

Repairing only the annotated roots cannot reproduce these unsupported
downstream responses, leaving local-compatibility residuals. The objective
therefore retains every annotated root but adds one downstream variable to
absorb the unexplained residual, producing conservative supersets. For
example, TS event 14 retains \(\mathrm{LevelRaw}\) but adds its child
\(\mathrm{LevelState}\), while QT event 34 retains
\(\mathrm{Split2Ratio}\) but adds its child \(\mathrm{Tank3Level}\).

\noindent\textbf{Other cases.}
TA event 30 is a parent--child ambiguity: the returned parent \(u_1\) and
annotated child \(x_1\) attain
\mbox{\(\widehat U_{\{u_1\}}=\widehat U_{\{x_1\}}=0.25\)}. TS event 2 is a
discretization-boundary superset that retains the annotated
\(\mathrm{PumpOn}\) and \(\mathrm{PressureRaw}\) roots but adds the latter's
categorical child \(\mathrm{PressureState}\).

\begingroup
\fontsize{8.5}{9.5}\selectfont
\bibliography{aaai2027}
\endgroup

\end{document}